\definecolor{shadecolor}{rgb}{0.95, 0.95, 0.86}
\renewcommand{\d}{{\mathrm d}}
\newcommand{\im}{\mathrm{i}}
\newcommand{\e}{\mathrm{e}}
\numberwithin{equation}{section}
\newtheorem{theo}{Theorem}[section]
\newtheorem{rem}[theo]{Remark}
\newtheorem{prop}[theo]{Proposition} 
\newtheorem{cor}[theo]{Corollary}
\begin{document}

\title[Short distance asymptotics]{Short distance asymptotics for a generalized two-point scaling function in the two-dimensional Ising model}

\author{Thomas Bothner}
\address{Department of Mathematics, University of Michigan, 2074 East Hall, 530 Church Street, Ann Arbor, MI 48109-1043, United States}
\email{bothner@umich.edu}

\author{William Warner}
\address{Department of Mathematics, University of Michigan, 2074 East Hall, 530 Church Street, Ann Arbor, MI 48109-1043, United States}
\email{wipawa@umich.edu}

\keywords{Ising model, generalized 2-point function, short distance expansion.}

\subjclass[2010]{Primary 82B20; Secondary 70S05, 34M55}

\thanks{The results of this article grew out of an eight week long Research Experience for Undergraduates (REU) program hosted at the University of Michigan in summer 2018. The work of T.B. is supported by the AMS and the Simons Foundation through a travel grant and W.W. acknowledges financial support provided by the Michigan Center for Applied and Interdisciplinary Mathematics. Both authors are grateful to C. Doering for stimulating discussions.}

\begin{abstract} In the 1977 paper \cite{MTW} of B. McCoy, C. Tracy and T. Wu it was shown that the limiting two-point correlation function in the two-dimensional Ising model is related to a second order nonlinear Painlev\'e function. This result identified the scaling function as a tau-function and the corresponding connection problem was solved by C. Tracy in 1991 \cite{T}, see also the works by C. Tracy and H. Widom in 1998 \cite{TW}. Here we present the solution to a certain generalized version of the above connection problem which is obtained through a refinement of the techniques chosen in \cite{B}.
\end{abstract}

\date{\today}
\maketitle
\section{Introduction and statement of results}\label{sec:11}
This note is concerned with the solution of a generalized connection problem for a distinguished tau-function of the $\nu$-modified radial sinh-Gordon equation.
\subsection{Modified sinh-Gordon equation and connection problem}
In 1977, B. McCoy, C. Tracy and T. Wu derived the following result.
\begin{theo}[McCoy, Tracy, Wu \cite{MTW}, 1977] Let
\begin{equation*}
	f_{2n}(t;\nu):=\frac{(-1)^n}{n}\int_1^{\infty}\cdots\int_1^{\infty}\left[\prod_{j=1}^{2n}\frac{\e^{-ty_j}}{\sqrt{y_j^2-1}}\left(\frac{y_j-1}{y_j+1}\right)^{\nu}\frac{1}{y_j+y_{j+1}}\right]\prod_{j=1}^n\big(y_{2j}^2-1\big)\prod_{j=1}^{2n}\d y_j,\ \ n\in\mathbb{Z}_{\geq 1},
\end{equation*}
with $y_{2n+1}\equiv y_1,t>0$ and $\nu>-\frac{1}{2}$. Then for any $\lambda\in[0,\frac{1}{\pi}]$,
\begin{equation}\label{e:1}
	\exp\left[-\sum_{n=1}^{\infty}\lambda^{2n}f_{2n}(t;\nu)\right]=\exp\bigg\{\frac{1}{4}\int_t^{\infty}\left[\sinh^2\psi-\left(\frac{\d\psi}{\d s}\right)^2+\frac{4\nu}{s}\sinh^2\left(\frac{\psi}{2}\right)\right]s\,\d s\bigg\}\cosh\left(\frac{\psi}{2}\right),
\end{equation}
where $\psi=\psi(t;\nu,\lambda)\in\mathbb{R}$ solves the second order nonlinear ODE
\begin{equation}\label{e:2}
	\frac{\d^2\psi}{\d t^2}+\frac{1}{t}\frac{\d\psi}{\d t}=\frac{1}{2}\sinh(2\psi)+\frac{2\nu}{t}\sinh\psi,
\end{equation}
subject to the boundary condition
\begin{equation}\label{Watson}
	\psi(t;\nu,\lambda)\sim 2\lambda\int_1^{\infty}\frac{\e^{-ty}}{\sqrt{y^2-1}}\left(\frac{y-1}{y+1}\right)^{\nu}\,\d y\ \ \ \ \textnormal{as}\ \ t\rightarrow+\infty.
\end{equation}
\end{theo}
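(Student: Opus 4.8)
The plan is to first turn the left-hand side of \eqref{e:1} into a Fredholm determinant, and then to read the sinh-Gordon equation and the tau-function representation off the integrable structure of the underlying operator. Observe that each variable $y_j$ sits in exactly two of the Cauchy factors $(y_j+y_{j+1})^{-1}$, so distributing its single-variable weight symmetrically across the two incident factors recasts the $2n$-fold integral as a closed trace. Concretely, with $u(y):=\frac{\e^{-ty}}{\sqrt{y^2-1}}\big(\frac{y-1}{y+1}\big)^{\nu}$ and the kernels
\[
K(x,y):=\frac{\sqrt{u(x)}\,\sqrt{u(y)(y^2-1)}}{x+y},\qquad K^\ast(x,y)=\frac{\sqrt{u(x)(x^2-1)}\,\sqrt{u(y)}}{x+y}
\]
on $L^2(1,\infty)$, the alternating even-index weight $\prod_{j=1}^n(y_{2j}^2-1)$ yields $\int\!\cdots=\tr\big[(KK^\ast)^n\big]$, whence $f_{2n}=\frac{(-1)^n}{n}\tr\big[(KK^\ast)^n\big]$. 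Summing the logarithmic series then gives
\[
\exp\Big[-\sum_{n=1}^{\infty}\lambda^{2n}f_{2n}(t;\nu)\Big]=\det\big(I+\lambda^2 KK^\ast\big),
\]
with $KK^\ast\ge 0$ trace-class for $t>0$ and $\nu>-\tfrac12$, so the whole problem collapses onto a single positive determinant $\tau(t)$.

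Next I would exploit that $KK^\ast$ is an integrable operator in the Its--Izergin--Korepin--Slavnov sense. After trading the Cauchy factor $(x+y)^{-1}$ for a $2\times2$ matrix jump, one obtains a Riemann--Hilbert problem whose solution $Y(z;t)$ carries the resolvent of $I+\lambda^2 KK^\ast$. The weight $\e^{-ty}$ makes the jump depend on $t$ through a simple exponential conjugation, so $(\partial_t Y)Y^{-1}$ is rational in $z$ and furnishes a Lax pair; its compatibility is an isomonodromy condition. Extracting the appropriate scalar $\psi=\psi(t;\nu,\lambda)$ from the subleading coefficients of $Y$ at its singular points, I expect this condition to reproduce exactly the $\nu$-modified radial sinh-Gordon equation \eqref{e:2}, the regular singularity carrying the weight $\big(\frac{y-1}{y+1}\big)^{\nu}$ being responsible for the $\frac{2\nu}{t}\sinh\psi$ term.

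To obtain the representation \eqref{e:1} itself I would compute the logarithmic $t$-derivative through the trace formula $\frac{\d}{\d t}\log\det(I+\lambda^2KK^\ast)=-\lambda^2\tr\big[(I+\lambda^2KK^\ast)^{-1}\partial_t(KK^\ast)\big]$ and re-express its right-hand side in terms of the same RHP data that define $\psi$. Comparing with the $t$-derivative of the right-hand side of \eqref{e:1}, simplified using \eqref{e:2}, should match the two up to a $t$-independent constant, while the algebraic prefactor $\cosh(\frac{\psi}{2})$ emerges as a distinguished entry of $Y$ at a fixed point rather than from the integral. The constant and the boundary condition \eqref{Watson} are then fixed as $t\to+\infty$: there $KK^\ast$ is exponentially small, so $\log\tau\sim\lambda^2\tr(KK^\ast)=-\lambda^2f_2$, whereas on the right the linearization of \eqref{e:2} is the modified-Bessel equation $\psi''+\frac1t\psi'-\big(1+\frac{2\nu}{t}\big)\psi=0$, one recessive solution of which is exactly the Watson integral in \eqref{Watson}; matching the terms of order $\lambda^2$ pins down both the normalization and the prefactor $2\lambda$.

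The main obstacle I anticipate is the isomonodromy bookkeeping of the previous paragraph: recasting $(x+y)^{-1}$ into genuine IIKS form, selecting the correct $\psi$ among several natural scalar invariants of $Y$, and verifying that the resolvent trace collapses onto the specific density $\sinh^2\psi-(\frac{\d\psi}{\d s})^2+\frac{4\nu}{s}\sinh^2(\frac{\psi}{2})$ together with the connection factor $\cosh(\frac{\psi}{2})$. This is precisely where the refinement of the techniques of \cite{B} becomes indispensable, since the modification parameter $\nu$ introduces an extra regular singular point into the Riemann--Hilbert problem and hence the additional $\nu$-dependent terms that distinguish \eqref{e:2} and \eqref{e:1} from their classical $\nu=0$ counterparts.
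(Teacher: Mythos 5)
First, a structural point: the paper you were given does not prove this statement at all. It is quoted as background, with attribution, from McCoy, Tracy and Wu \cite{MTW} (1977); the paper's own proofs (Sections \ref{sec:2} and \ref{sec:3}) take \eqref{e:1}--\eqref{Watson} as known input and establish Theorem \ref{res:1}. So your attempt can only be judged against the cited literature, not against an internal argument. Your opening reduction is correct and checkable: with $u(y)=\frac{\e^{-ty}}{\sqrt{y^2-1}}\big(\frac{y-1}{y+1}\big)^{\nu}$, the cyclic Cauchy structure of $f_{2n}$ gives $f_{2n}(t;\nu)=\frac{(-1)^n}{n}\tr\big[(KK^{\ast})^n\big]$ for your kernels (and $K^{\ast}$ is genuinely the adjoint of $K$, so $KK^{\ast}\geq 0$ and trace class for $t>0$, $\nu>-\frac{1}{2}$), whence the left-hand side of \eqref{e:1} equals $\det(I+\lambda^{2}KK^{\ast})$. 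Your claim that the Watson integral in \eqref{Watson} solves the linearization $\psi''+\frac{1}{t}\psi'-\big(1+\frac{2\nu}{t}\big)\psi=0$ of \eqref{e:2} is also correct: integrating $\int_1^{\infty}\e^{-ty}(y-1)^{\nu+\frac{1}{2}}(y+1)^{\frac{1}{2}-\nu}\,\d y$ by parts produces exactly the needed $\frac{2\nu}{t}$ term.

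The gap is that everything after the determinant identity is a program, and its unexecuted steps are precisely the substance of the theorem. (i) You never set up the Riemann--Hilbert problem (the recasting of $(x+y)^{-1}$ into IIKS form is only flagged as an obstacle), never exhibit the Lax pair, and never verify that its compatibility condition is \eqref{e:2}; the phrases ``I expect'' and ``should match'' stand in for the computations that would constitute the proof, including the claim that the regular singular point generates $\frac{2\nu}{t}\sinh\psi$. (ii) The exact identity \eqref{e:1} -- in particular the factor $\cosh(\frac{\psi}{2})$ and the specific density $\big[\sinh^2\psi-(\frac{\d\psi}{\d s})^2+\frac{4\nu}{s}\sinh^2(\frac{\psi}{2})\big]s$ -- requires proving that $\frac{\d}{\d t}\log\det(I+\lambda^2KK^{\ast})$ equals the $t$-derivative of the full right-hand side and then fixing the integration constant at $t=+\infty$; you describe this strategy but carry out none of it. (iii) The theorem asserts existence of a real, pole-free solution $\psi$ on $(0,\infty)$ for every $\lambda\pi\in[0,1]$; in your framework this needs a solvability (vanishing-lemma) argument, which is nowhere addressed, and the endpoint $\lambda=\frac{1}{\pi}$, i.e. $\sigma=1$, is exactly the delicate case (compare \eqref{e:77}). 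Finally, note that the proof your route would replace is quite different: \cite{MTW} predates Riemann--Hilbert/IIKS technology and works directly on the Fredholm expansion with integral-equation and Bessel-function manipulations. The isomonodromy route you sketch is viable and is the modern way to re-derive this result, but as written it is a plan, not a proof.
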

The right-hand side in \eqref{e:1}, which we shall abbreviate as $\tau(t;\nu,\lambda)$ below, appeared first in the Wu, McCoy, Tracy, Barouch analysis \cite{BMTW} of the scaling limit of the spin-spin correlation function in the 2D Ising model. In a nutshell (cf. \cite{MW,P} for more details), if $\xi=\xi(T)$ is the temperature dependent correlation length (which diverges as $|T-T_c|^{-1}$ near the critical temperature $T_c$) and $\langle\sigma_{00}\sigma_{MN}\rangle$ the 2-point function on an isotropic Onsager lattice after the thermodynamic limit, then
\begin{equation}\label{e:3}
	R^{\frac{1}{4}}\langle\sigma_{00}\sigma_{MN}\rangle\rightarrow F_{\pm}(t),
\end{equation}
holds true in the {\it massive scaling limit} $\xi\rightarrow\infty,R=\sqrt{M^2+N^2}\rightarrow\infty$ such that $t=R/\xi>0$ is fixed. The $\pm$ choice refers to the scaling limit taken either above or below the critical temperature $T_c$ and $F_{\pm}(t)$ are the so-called {\it scaling functions} given by
\begin{equation}\label{scaling}
	F_-(t)=2^{\frac{3}{8}}t^{\frac{1}{4}}\tau\left(t,0,\frac{1}{\pi}\right)\ \ \ \ \ \  \ \ \textnormal{and}\ \ \ \ \ \ \ \ F_+(t)=F_-(t)\tanh\left[\frac{1}{2}\psi\left(t;0,\frac{1}{\pi}\right)\right].
\end{equation}
The nonlinear equation \eqref{e:2}, coined $\nu$-modified radial sinh-Gordon equation, is intimately related to Painlev\'e special function theory since $u(t;\nu,\lambda):=\e^{-\psi(2t;\nu,\lambda)}$ solves
\begin{equation}\label{e:4}
	\frac{\d^2 u}{\d t^2}=\frac{1}{u}\left(\frac{\d u}{\d t}\right)^2-\frac{1}{t}\frac{\d u}{\d t}+\frac{2\nu}{t}\big(u^2-1\big)+u^3-\frac{1}{u},
\end{equation}
which, cf. \cite[$32.2.3$]{NIST}, is Painlev\'e-III with constants $(\alpha,\beta,\gamma,\delta)=(2\nu,-2\nu,1,-1)$. Historically, \eqref{e:2}, resp. \eqref{e:4}, was the very first appearance of a Painlev\'e function in a problem of mathematical physics - predating various field theoretic or nonlinear wave theoretic applications and in particular predating their numerous appearances in quantum gravity, enumerative toplogy, random matrix theory, combinatorics, integrable probability, etc. It is known that $\psi(t;\nu,\lambda)$, uniquely determined by the above specified $+\infty$ behavior, is a highly transcendental function which cannot be expressed in terms of known classical special functions (i.e. not in terms of a finite number of contour integrals of elementary, or elliptic or finite genus algebraic functions). This fact turns the underlying connection problem, i.e. the problem of computing the $t\downarrow 0$ behavior of $\psi(t;\nu,\lambda)$ from its known $t\rightarrow+\infty$ asymptotics (or vice versa), into a challenging and foremost nonstandard task. Nowadays powerful analytical techniques of inverse scattering, isomonodromy or Riemann-Hilbert theory allow us to derive the underlying connection formul\ae\, in a systematic fashion. Again, \cite{MTW} predates these approaches and McCoy, Tracy and Wu had, for the very first time, derived a complete connection formula in case of \eqref{e:2}, resp. \eqref{e:4}. In detail, they showed in \cite[$1.10$]{MTW} that for fixed $\lambda\in[0,\frac{1}{\pi})$ the small $t$-expansion of $u(t;\nu,\lambda)$ is given by
\begin{equation}\label{e:5}
	u\Big(\frac{t}{2};\nu,\lambda\Big)=Bt^{\sigma}\left(1-\frac{\nu}{B}(1-\sigma)^{-2}t^{1-\sigma}+B\nu(1+\sigma)^{-2}t^{1+\sigma}+\mathcal{O}\big(t^{2(1-\sigma)}\big)\right),
\end{equation}
with $(t,\nu,\lambda)$-differentiable error terms and where $(\sigma,B)$ are the following functions of $(\nu,\lambda)$,
\begin{equation}\label{e:6}
	\sigma=\frac{2}{\pi}\arcsin(\pi\lambda)\in[0,1);\ \ \ \ B=2^{-3\sigma}\frac{\Gamma^2(\frac{1}{2}(1-\sigma))}{\Gamma^2(\frac{1}{2}(1+\sigma))}\frac{\Gamma(\nu+\frac{1}{2}(1+\sigma))}{\Gamma(\nu+\frac{1}{2}(1-\sigma))},
\end{equation}
in terms of Euler's Gamma function $\Gamma(z)$, cf. \cite[$5.2.1$]{NIST}. Note that $B=B(\nu,\sigma)$ might vanish for $\nu<0$, thus expansion \eqref{e:5} holds uniformly in $(\sigma,\nu)\in[0,1)\times(-\frac{1}{2},+\infty)\setminus\{\sigma=1+2\nu\}$ chosen from compact subsets. Also, $B=B(\nu,\sigma)$ is negative iff $\sigma>1+2\nu$, compare Figure \ref{figure1} below. Hence, given our connection of  $u(t;\nu,\lambda)=\e^{-\psi(2t;\nu,\lambda)}\geq 0$ to the real-valued scaling functions in the 2D Ising model \eqref{e:1}, we shall in the following restrict ourselves, whenever working with \eqref{e:5} and \eqref{e:6}, to the values $(\sigma,\nu)\in[0,1)\times(-\frac{1}{2},+\infty)$ for which $\sigma<1+2\nu$.
\begin{center}
\begin{figure}[tbh]
\resizebox{0.4515\textwidth}{!}{\includegraphics{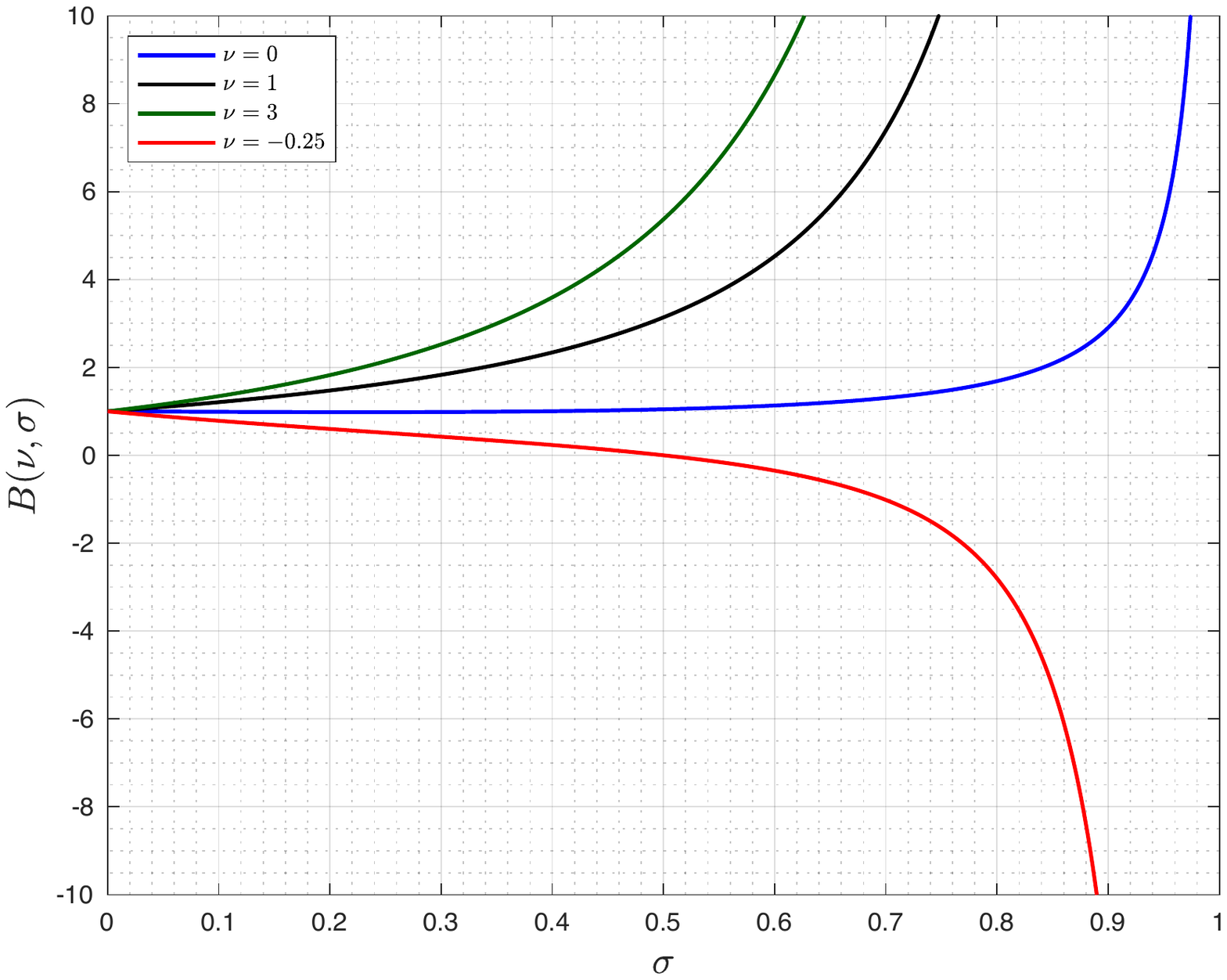}}
\ \ \ \ \ \resizebox{0.447\textwidth}{!}{\includegraphics{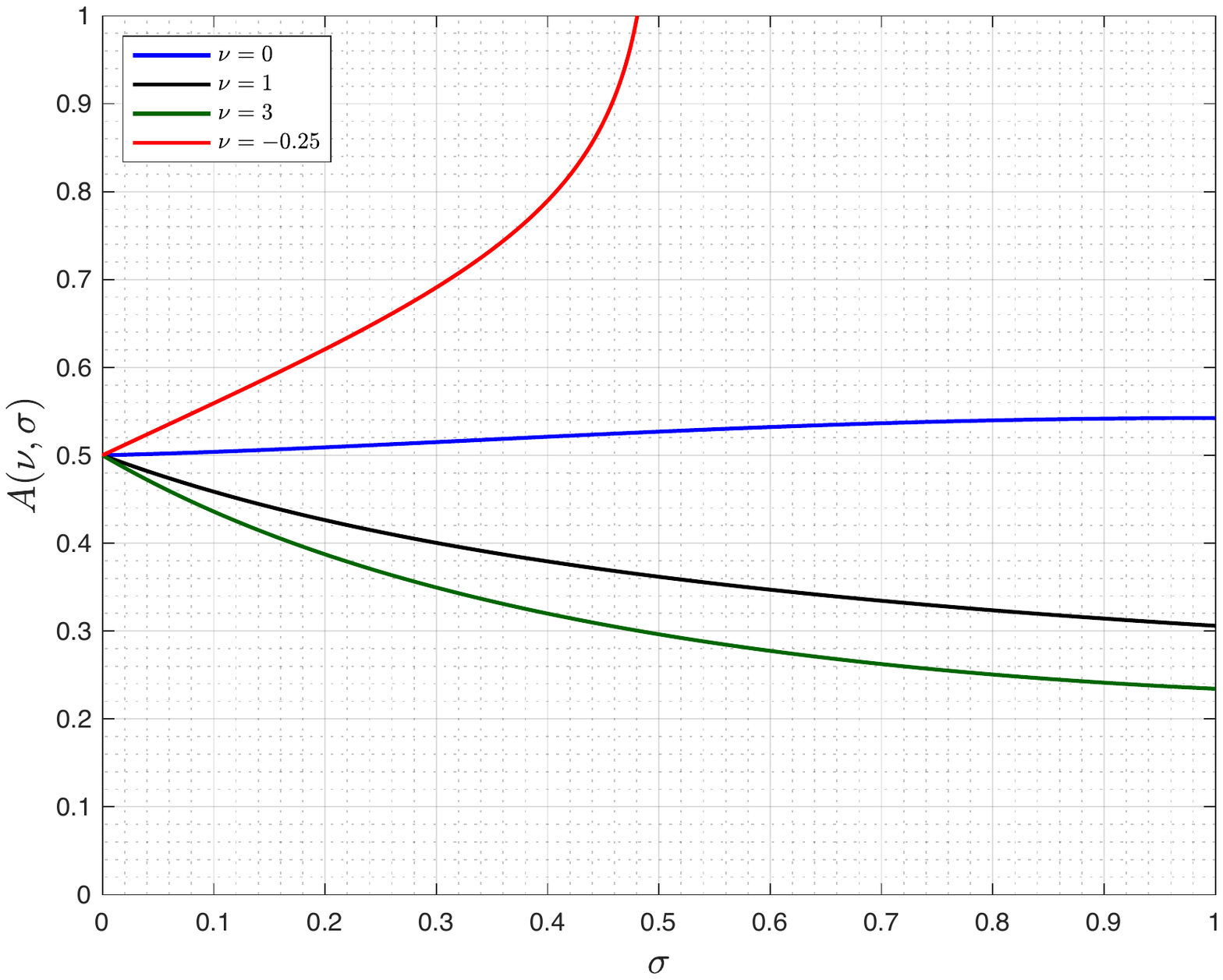}}
\caption{The coefficient $B=B(\nu,\sigma)$ as function of $\sigma\in[0,1]$ for varying $\nu>-\frac{1}{2}$ on the left. On the right we display $A=A(\nu,\sigma)$ as computed in Theorem \ref{res:1} below for $\sigma\in[0,1]$ with varying $\nu>-\frac{1}{2}$ such that $\nu+\frac{1}{2}(1-\sigma)>0$.}
\label{figure1}
\end{figure}
\end{center} 

In addition to \eqref{e:5}, \cite[$4.119$]{MTW} also investigated the behavior of $u(t;\nu,\lambda)$ in the limit $\sigma\uparrow 1$ ($\lambda\uparrow \frac{1}{\pi}$) for small enough $t$ and obtained 
\begin{equation}\label{e:77}
	u\Big(\frac{t}{2};\nu,\frac{1}{\pi}\Big)=\frac{t}{2}\left\{\nu\ln^2t-c(\nu)\ln t+\frac{1}{4\nu}\big(c^2(\nu)-1\big)\right\}+o(1),
\end{equation}
where $c(\nu)=1+2\nu\big(3\ln 2-2\gamma_{E}-\psi_0(1+\nu)\big)$ which is defined in terms  of the digamma function $\psi_0(z)$, cf. \cite[$5.2.2$]{NIST}, and Euler's constant $\gamma_E$, cf. \cite[$5.2.3$]{NIST}. 
\subsection{Hamiltonian structure} Equation \eqref{e:2} is expressible as the Hamiltonian system
\begin{equation}\label{Hami}
	\frac{\d q}{\d t}=\frac{\partial H}{\partial p},\ \ \ \frac{\d p}{\d t}=-\frac{\partial H}{\partial q},\ \ \ \ \ \ \ H=H(q,p,t,\nu)=\frac{t}{2}\sinh^2 q-\frac{p^2}{2t}+4\nu\sinh^2\left(\frac{q}{2}\right)
\end{equation}
with the identification $q=q(t;\nu,\lambda)\equiv\psi(t;\nu,\lambda)$ so that in turn from \eqref{e:1},
\begin{equation}\label{e:7}
	\tau(t;\nu,\lambda)=\exp\left[\frac{1}{2}\int_t^{\infty}H(q,p,s,\nu)\,\d s-\nu\int_t^{\infty}\sinh^2\left(\frac{q}{2}\right)\,\d s\right]\cosh\left(\frac{q}{2}\right).
\end{equation}
This equality identifies $\tau(t;0,\lambda)$ (modulo the $\cosh$ factor) as a tau-function for \eqref{e:2} and the associated transcendent $\psi(t;0,\lambda)$, cf. \cite{JMU}. For $\nu\neq 0$ a similar statement does not appear to be true, however if we allow Painlev\'e-III \eqref{e:4} to re-enter at this point, then \cite[$3.12$]{B} showed that $\tau(t;\nu,\lambda)$ for $\nu>-\frac{1}{2}$ is expressible as a product of two Painlev\'e-III tau-functions. Still, since this interpretation won't play any further role below, we shall simply address \eqref{e:7} as {\it generalized} tau-function for \eqref{e:2} and continue with the discussion of the associated tau-function connection problem:\bigskip

Standard asymptotic techniques based on \eqref{Watson}, \eqref{e:5}, \eqref{e:6} and \eqref{e:77} show that
\begin{equation*}
	\tau(t;\nu,\lambda)\sim1-\frac{\lambda^2}{2}\frac{\Gamma^2(\nu+\frac{1}{2})}{(2t)^{2\nu+1}}\e^{-2t}\left\{\nu-\left(\nu+\frac{1}{2}\right)\left(\nu^2+\frac{3\nu}{2}+1\right)\frac{1}{t}\right\},\  t\rightarrow+\infty,\ \ \lambda\pi\in[0,1],\ \ \nu>-\frac{1}{2},
\end{equation*}
as well as
\begin{equation}\label{prob}
	\tau(t;\nu,\lambda)\sim A(\nu,\lambda)t^{\frac{\sigma}{4}(\sigma-2)},\ \ \ t\downarrow 0,\ \ \lambda\pi\in(0,1],\ \ \ \ \sigma<1+2\nu,
\end{equation}
with some $t$-independent coefficient $A(\nu,\lambda)$. It is straightforward to write down a total integral formula for $A(\nu,\lambda)$ in terms of $\psi(t;\nu,\lambda)$ using \eqref{e:1}, however a simple closed form expression for it cannot be obtained in this elementary way. The explicit computation of $A(\nu,\lambda)$ is known as tau-function connection problem and a special case of it was solved by Tracy \cite{T}, see also Tracy and Widom \cite{TW}. In detail, he computed $A(\nu,\lambda)$ for $\nu=0$,
\begin{equation}\label{Tracy}
	A(0,\lambda)=\e^{3\zeta'(-1)-(3s^2+\frac{1}{6})\ln 2}\big(G(1+s)G(1-s)\big)^{-1},\ \ \ \ s=\frac{1}{2}(1-\sigma)\in\left[0,\frac{1}{2}\right),
\end{equation}
in terms of the Riemann zeta function $\zeta(z)$, see \cite[$25.2.1$]{NIST}, and Barnes G-function $G(z)$, see \cite[$5.17.3$]{NIST}. In this note we prove the following general formula for $A(\nu,\lambda)$.
\begin{theo}\label{res:1} Let $s=\frac{1}{2}(1-\sigma)\in[0,\frac{1}{2}),\nu>-\frac{1}{2}$ where $\sigma=\frac{2}{\pi}\arcsin(\pi\lambda)\in(0,1]$ such that $s+\nu>0$. Then,
\begin{align}
	A(\nu,\lambda)=&\,\,\e^{3\zeta'(-1)-(3s^2+\frac{1}{6})\ln 2}\left(\frac{G^2(1+s)G^2(1-s)}{G(1+s+\nu)G(1-s+\nu)}\right)^{-1}\frac{G^2(\frac{1}{2})\Gamma(\frac{1}{2})}{G^2(\nu+\frac{1}{2})\Gamma(\nu+\frac{1}{2})}\times\nonumber\\
	&\times\exp\left[-\frac{\nu}{2}\ln\left(\frac{\Gamma(1-s+\nu)\Gamma(1+s+\nu)}{\Gamma^2(\nu+\frac{1}{2})}\right)\right](s+\nu)^{\frac{\nu}{2}}\label{e:8}.
\end{align}
\end{theo}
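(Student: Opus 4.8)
The plan is to treat $\nu$ as a deformation parameter and to bootstrap from the already-known case $\nu=0$ in \eqref{Tracy}. Setting $\nu=0$ and $(s+\nu)^{\nu/2}=1$ in \eqref{e:8} collapses the Barnes-$G$ quotient to $\big(G(1+s)G(1-s)\big)^{-1}$ and reduces the $\Gamma,G$ prefactors and the exponential to $1$, so \eqref{e:8} recovers Tracy's formula exactly. This indicates that the entire content of Theorem \ref{res:1} is carried by the ratio $A(\nu,\lambda)/A(0,\lambda)$, and that it suffices to compute $\partial_\nu\ln A(\nu,\lambda)$ at fixed $\lambda$ (hence fixed $\sigma,s$) and integrate in $\nu$ from $0$. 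I would therefore write $\ln A(\nu,\lambda)=\lim_{t\downarrow0}\big[\,\ln\tau(t;\nu,\lambda)-\tfrac{\sigma}{4}(\sigma-2)\ln t\,\big]$ via \eqref{prob}, and, since the subtracted term is $\nu$-independent, reduce matters to $\partial_\nu\ln A=\lim_{t\downarrow0}\partial_\nu\ln\tau$.

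First I would differentiate the Hamiltonian representation \eqref{e:7} in $\nu$. The explicit $\nu$-dependence of $H$ contributes $2\int_t^\infty\sinh^2(q/2)\,ds$, which combines with the explicit $-\int_t^\infty\sinh^2(q/2)\,ds$ from the second factor to leave a net $\int_t^\infty\sinh^2(q/2)\,ds$; the remaining pieces involve the dynamical variation $\partial_\nu q,\partial_\nu p$ along the trajectory $q=\psi$. Because $(q,p)$ is an isomonodromic family for \eqref{e:2}, these dynamical terms are not free: the correct statement is that the $d\nu$-component of the (extended) Jimbo--Miwa--Ueno tau-differential, cf. \cite{JMU}, is a monodromy-theoretic quantity, and its evaluation is the technical heart of the argument. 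I would carry this out by the Riemann--Hilbert route of \cite{B}: set up the $2\times2$ isomonodromy problem for Painlev\'e-III \eqref{e:4} whose Stokes and formal-monodromy data encode the pair $(\nu,\lambda)$, express $\partial_\nu\ln\tau$ through its solution, and read off $\lim_{t\downarrow0}\partial_\nu\ln\tau$ from a nonlinear steepest descent analysis at the origin.

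The boundary data feeding this limit are exactly the expansions recorded in the excerpt. At $t\to+\infty$ the transcendent decays by \eqref{Watson}, so that endpoint contributes nothing; at $t\downarrow0$ one inserts $e^{-\psi(t;\nu,\lambda)}=Bt^\sigma(1+\cdots)$ from \eqref{e:5}--\eqref{e:6}, whence $\psi(s)\sim-\sigma\ln s-\ln B$ and $\sinh^2(\psi/2)\sim\tfrac{1}{4B}s^{-\sigma}$ is integrable since $\sigma<1$. The key structural feature is that the leading short-distance data is governed by $B(\nu,\sigma)$: differentiating the explicit product \eqref{e:6} gives $\partial_\nu\ln B=\psi_0(\nu+1-s)-\psi_0(\nu+s)$ in terms of the digamma function $\psi_0$, and the Riemann--Hilbert analysis shows that $\partial_\nu\ln A$ likewise organizes into digamma values at $\nu+s$ and $\nu+1-s$ together with a clean algebraic remainder.

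It then remains to integrate in $\nu$ from $0$ and to add Tracy's constant \eqref{Tracy}. Integrating digamma terms once produces $\ln\Gamma$ of arguments $\nu\pm s+\mathrm{const}$, and the characteristic second integration, governed by the Barnes relation $G(z+1)=\Gamma(z)G(z)$ (so that a primitive of $\ln\Gamma$ is expressible through $\ln G$), is precisely what manufactures the Barnes-$G$ quotients $G(1\pm s+\nu)/\big(G(1\pm s)\cdots\big)$ and the factors $G^2(\tfrac12)/G^2(\nu+\tfrac12)$ in \eqref{e:8}; the $\Gamma$-factors, the $\exp[-\tfrac\nu2\ln(\cdots)]$ term, and the algebraic $(s+\nu)^{\nu/2}$ assemble from the first-order and boundary contributions, with integer shifts absorbed by the functional equations of $\Gamma$ and $G$. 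The main obstacle is the middle step: extracting $\lim_{t\downarrow0}\partial_\nu\ln\tau$ rigorously. The naive hope that $\partial_\nu$ acts only on the explicit $\nu$-dependence fails, since $\psi$ does not extremize the functional in \eqref{e:1} (its Euler--Lagrange equation differs from \eqref{e:2}), so the dynamical variation must be genuinely controlled---either through the Riemann--Hilbert analysis above or, equivalently, by invoking the factorization of $\tau(t;\nu,\lambda)$ into two Painlev\'e-III tau-functions from \cite{B} and applying the known monodromy-dependence formula to each factor. Justifying the interchange of $\lim_{t\downarrow0}$ with $\partial_\nu$ additionally requires the uniform $(t,\nu,\lambda)$-differentiability of the error terms in \eqref{e:5}, which is exactly why that uniformity was emphasized.
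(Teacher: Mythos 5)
Your reduction to computing $\partial_\nu\ln A=\lim_{t\downarrow0}\partial_\nu\ln\tau$ and integrating in $\nu$ from Tracy's value \eqref{Tracy} is a legitimate strategy, and you correctly diagnose the obstruction: the envelope argument fails because $\psi$ does not extremize the functional in \eqref{e:1}, so the dynamical variation $\partial_\nu q,\partial_\nu p$ genuinely contributes. But the proposal has a genuine gap at exactly that point. The evaluation of $\lim_{t\downarrow0}\partial_\nu\ln\tau$ is never carried out; it is delegated to a Riemann--Hilbert/isomonodromy analysis whose output (digamma values at $\nu+s$ and $\nu+1-s$ ``together with a clean algebraic remainder'') is asserted, not derived. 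Since every factor of \eqref{e:8} beyond Tracy's constant would be manufactured by integrating precisely this unknown quantity, nothing is actually proved. Moreover the delegation is not routine: for $\nu\neq0$ the function $\tau(t;\nu,\lambda)$ is \emph{not} a Jimbo--Miwa--Ueno tau-function of \eqref{e:2} (the paper explicitly calls it a generalized tau-function), so a ``$\d\nu$-component of the extended JMU differential'' has no direct meaning; one would first need the factorization of $\tau$ into two Painlev\'e-III tau-functions from \cite[$3.12$]{B} and then the monodromy-dependence machinery of \cite{ILP} applied to each factor --- a substantial program you mention only as an aside. You also misattribute a ``Riemann--Hilbert route'' to \cite{B}: the entire point of that paper, and of the present one, is that no Riemann--Hilbert or steepest-descent analysis is needed.

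The paper closes the gap you defer by exact, elementary Hamiltonian identities. First, $\int_t^\infty H\,\d s=-tH+S+4\nu\int_t^\infty\sinh^2(q/2)\,\d s$, where $S$ is the action integral (proved by $t$-differentiation plus exponential decay at $t=+\infty$); second, $S=-\int_0^\lambda p\,\partial_{\lambda'}q\,\d\lambda'$; third --- and this is the step that replaces your entire Riemann--Hilbert middle section --- the troublesome integral obeys the \emph{exact} identity $\int_t^\infty\sinh^2(q/2)\,\d s=-\frac14\bigl(p\,\partial_\nu q+\partial_\nu S\bigr)$, proved in a few lines from Hamilton's equations and integration by parts. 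Substituting these into \eqref{e:7} gives the closed formula \eqref{p:4}, in which every term has explicit small-$t$ asymptotics read off from \eqref{e:5}--\eqref{e:6}; Gamma/Barnes-G calculus via \eqref{Barnes} then yields \eqref{e:8} directly, with \eqref{Tracy} recovered as the special case $\nu=0$ rather than used as input. If you want to salvage your plan, note that this third identity is precisely the formula for the ``dynamical variation'' you were missing: it shows $p\,\partial_\nu q+\partial_\nu S$, and hence $\partial_\nu\ln\tau$, is computable with no monodromy theory at all.
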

We choose not to simplify the special values $G(\frac{1}{2})$ and $\Gamma(\frac{1}{2})$ in \eqref{e:8} any further 
since \eqref{e:8} now quite obviously degenerates to Tracy's result \eqref{Tracy} for fixed $s\in[0,\frac{1}{2})$ as $\nu\rightarrow0$. The importance of \eqref{Tracy} and its generalization \eqref{e:8} stems from the following application to the scaling hypothesis of  spin-spin functions in the analysis of Wu, McCoy, Tracy and Barouch \cite{BMTW}: As shown by Wu in \cite[$5.7$]{W} the critical correlation $S_N:=\langle\sigma_{00}\sigma_{NN}\rangle_{T=T_c}$ along the diagonal satisfies
\begin{equation}\label{Wu:1}
	S_N\sim G\left(\frac{1}{2}\right)G\left(\frac{3}{2}\right)(1+\tanh^2(\beta_cJ_1))^{\frac{1}{4}}(1-\tanh^2(\beta_cJ_1))^{-\frac{1}{4}}N^{-\frac{1}{4}},\ \ \ N\rightarrow\infty,
\end{equation}
where, compare for instance \eqref{Gspecial} below,
\begin{equation*}
	G\left(\frac{1}{2}\right)G\left(\frac{3}{2}\right)=\e^{3\zeta'(-1)+\frac{1}{12}\ln 2}.
\end{equation*}
In order to prove that the limiting scaling functions $F_{\pm}(t)$ connect to the critical result \eqref{Wu:1} one must then derive the small $t$-expansions of $F_{\pm}(t)$ and confirm that the above numerical constant $G(\frac{1}{2})G(\frac{3}{2})$ is precisely equal to
\begin{equation*}
	\lim_{\substack{T\downarrow\uparrow T_c\\ N\rightarrow\infty}}N^{\frac{1}{4}}\langle\sigma_{00}\sigma_{NN}\rangle\bigg|_{t=0}=2^{-\frac{1}{8}}\lim_{t\downarrow 0}F_{\pm}(t).\footnote{The additional power of two in the right-hand side is needed since on the diagonal $R^{\frac{1}{4}}=2^{\frac{1}{8}}N^{\frac{1}{4}}$.}
\end{equation*}
But this is now an easy task once \eqref{e:8} is available: indeed, from \eqref{scaling}, \eqref{e:77}, \eqref{prob} and \eqref{e:8} we find at once
\begin{equation*}
	2^{-\frac{1}{8}}\lim_{t\downarrow 0}F_{\pm}(t)=2^{\frac{1}{4}}A\left(0,\frac{1}{\pi}\right)=\e^{3\zeta'(-1)+\frac{1}{12}\ln 2},
\end{equation*}
and the connection to \eqref{Wu:1} is therefore rigorously established.

\subsection{Further generalizations} It is also worthwhile to mention that other generalizations of the 2D-Ising $\tau$-function $\tau(t;0,\frac{1}{\pi})$ have been studied in the literature. For instance, in the Jimbo, Miwa, and Sato \cite{SMJ} analysis of holonomic quantum fields one considers instead of \eqref{e:1} the following 
\begin{equation*}
	\widehat{\tau}(t;\theta,\lambda):=\exp\left\{\frac{1}{2}\int_t^{\infty}\left[\sinh^2\phi-\left(\frac{\d\phi}{\d s}\right)^2+\frac{\theta^2}{s^2}\tanh^2\phi\right]\,s\,\d s\right\},\ \ \ \theta\in(-1,1),
\end{equation*}
where $\phi=\phi(t;\theta,\lambda)$ satisfies the differential equation
\begin{equation}\label{PV}
	\frac{\d^2\phi}{\d t^2}+\frac{1}{t}\frac{\d\phi}{\d t}=\frac{1}{2}\sinh(2\phi)+\frac{\theta^2}{t^2}(1-\tanh^2\phi)\tanh\phi,
\end{equation}	
with boundary condition
\begin{equation*}
	\phi(t;\theta,\lambda)\sim 2\lambda K_{\theta}(t),\ \ \ t\rightarrow+\infty,\ \ \lambda\pi\in[0,1],
\end{equation*}
in terms of the modified Bessel function $K_{\theta}(z)$, cf. \cite[$10.25.3$]{NIST}. The ODE \eqref{PV} is a special version of Painlev\'e-V after changing variables and Jimbo \cite{J} subsequently solved part of the $\widehat{\tau}$-function connection problem in 1982, while working on the connection problem for Painlev\'e-V functions. The full solution (which is the analogue of our \eqref{e:8} for the small distance expansion of $\widehat{\tau}$) was given by Basor and Tracy \cite[Theorem $3$]{BT} in 1992 as
\begin{equation*}
	\widehat{\tau}(t;\theta,\lambda)\sim\tau_0(\theta,\lambda)t^{\frac{1}{2}(\sigma^2-\theta^2)},\ \ t\downarrow 0,\ \ \sigma\in[0,1):\ \pi^2\lambda^2=\sin\left(\frac{\pi}{2}(\sigma+\theta)\right)\sin\left(\frac{\pi}{2}(\sigma-\theta)\right),
\end{equation*}
with connection coefficient
\begin{equation*}
	\tau_0(\theta,\lambda)=2^{-2(\alpha^2-\beta^2)}\frac{G(1+\alpha+\beta)G(1+\alpha-\beta)G(1-\alpha+\beta)G(1-\alpha-\beta)}{G(1+2\alpha)G(1-2\alpha)},\ \ 2\alpha=\sigma,\ 2\beta=\theta.
\end{equation*}
Yet another occurrence of $\widehat{\tau}$ can be traced to Federbush's quantum field theory model \cite{F}: there the two point function is expressible in terms of $\widehat{\tau}(t;2\beta,\frac{\im}{\beta}\sin\pi\beta)$, cf. \cite{R}, with coupling constant $\beta$. In the same paper \cite{R}, Ruijsenaars also provided a partial treatment of the connection problem for the Federbush two point function and he derived a series representation for the connection coefficient later on. Still, the series was not summed and the full connection problem only solved by the above mentioned later work of Basor and Tracy \cite{BT}.
\begin{rem} Several other $\tau$-function connection problems in statistical mechanics and field theories were solved in the past. Without going into details, or claiming completeness of the following list, we mention the works of Lenard and Jimbo on impenetrable bosons \cite{L,JMMS}, the analysis of Wu and Wu, McCoy, Tracy, Barouch on Ising correlations \cite{W,BMTW} and the ever growing  random matrix theory themed literature, for instance Widom and Dyson \cite{Wi,D}, Ehrhardt and Krasovsky \cite{E,E2,K}, Deift, Its, Krasovsky \cite{DIK}, Deift, Its, Krasovsky and Zhou \cite{DIKZ} as well as Baik, Buckingham and DiFranco \cite{BBD}.

\end{rem}

\subsection{Methodology and outline of paper} As indicated in the abstract our derivation of \eqref{e:8} will rely on a refinement of the techniques chosen in \cite{B}, that is solely on the Hamiltonian structure \eqref{Hami}, \eqref{e:7} and the known solution of the connection problem for \eqref{e:4}, i.e. the boundary data \eqref{e:5}, \eqref{e:6}, \eqref{e:77}. This is in sharp contrast to all the above mentioned connection problems for $\tau$-functions. There, one relied almost always on a deep connection of the underlying $\tau$-function to the theory of Toeplitz (or Hankel) determinants with possible singular generating functions. In this context powerful operator theoretical or in more recent years Riemann-Hilbert nonlinear steepest descent techniques are readily available in the derivation of asymptotics. Still, these techniques are fairly advanced and their implementation often a technical challenge. The most significant aspect of the present paper is the fact that a quicker and less technical way is available for \eqref{e:1}. In detail we will rewrite the Hamiltonian integral in \eqref{e:7} as action integral plus explicit terms without any integrals. This strategy for asymptotic analysis was first suggested, though not executed, for a generic Sine-Gordon tau-function in \cite[Appendix A.$1$]{IP}. The first successful implementation appeared then in the paper \cite{BIP} on tau-function asymptotics in random matrix theory and in \cite{B} where our \eqref{e:7} was analyzed asymptotically for $\nu=0$. Additionally, the last reference provides in \cite[Section $4$]{B} further discussions on recent advances on Painlev\'e $\tau$-function connection problems, most importantly a short discussion of the important works of Gamayun, Iorgov, Lisovyy, Tykhyy, Its and Prokhorov \cite{GIL,ILT,ILP} (the interested reader is also invited to find more information on Hamiltonian aspects of Painlev\'e tau-functions in \cite{IP1}).\smallskip

Regarding the organization of the remaining sections, we will generalize the above discussed action integral method to $\nu\neq 0$ in \eqref{e:1} when $\tau(t;\nu,\lambda)$ is no longer a classical $\tau$-function for $\psi$. In detail, in Section \ref{sec:2} we rewrite \eqref{e:7} in terms of classical action integrals and $\nu$-derivatives thereof (the last part is the main difference to \cite{B}). After that standard special function manipulations based on \eqref{e:5} and \eqref{e:6} yield our final result \eqref{e:8} in Section \ref{sec:3}.

\section{Proof of Theorem \ref{res:1} - exact identities}\label{sec:2} Our starting point is the following generalization of \cite[$(2.1)$]{B} for $\nu\neq 0$.
\begin{prop} Suppose $q=q(t;\nu,\lambda)$ and $p=p(t;\nu,\lambda)$ solve \eqref{Hami} with boundary condition \eqref{Watson} for any fixed $t>0,\lambda\pi\in[0,1]$ and $\nu>-\frac{1}{2}$. Then
\begin{equation}\label{p:1}
	\int_t^{\infty}H(q,p,s,\nu)\,\d s=-tH(q,p,t,\nu)+S(t;\nu,\lambda)+4\nu\int_t^{\infty}\sinh^2\left(\frac{q}{2}\right)\,\d s,
\end{equation}
where $S=S(t;\nu,\lambda)$ denotes the action integral
\begin{equation*}
	S(t;\nu,\lambda):=\int_t^{\infty}\left(p\frac{\d q}{\d s}-H(q,p,s,\nu)\right)\,\d s.
\end{equation*}
\end{prop}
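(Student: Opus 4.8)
The plan is to derive \eqref{p:1} from a single exact differential identity for the scaled Hamiltonian $sH$ evaluated along the flow, followed by one integration from $t$ to $+\infty$. First I would record the elementary but crucial fact that along any solution of the Hamiltonian system \eqref{Hami} the total $s$-derivative of $H$ collapses to its explicit partial derivative: since $\frac{\d q}{\d s}=\partial H/\partial p$ and $\frac{\d p}{\d s}=-\partial H/\partial q$, the two chain-rule contributions cancel and
\[
	\frac{\d}{\d s}H(q,p,s,\nu)=\frac{\partial H}{\partial s}=\frac{1}{2}\sinh^2 q+\frac{p^2}{2s^2}.
\]

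Differentiating the product $sH$ then gives $\frac{\d}{\d s}(sH)=H+s\,\partial_s H=H+\frac{s}{2}\sinh^2 q+\frac{p^2}{2s}$. The next step is purely algebraic: solving the definition of $H$ in \eqref{Hami} for $\frac{s}{2}\sinh^2 q$ and invoking $\frac{\d q}{\d s}=\partial H/\partial p=-p/s$ (so that $p^2/s=-p\,\frac{\d q}{\d s}$), the right-hand side rearranges into
\[
	\frac{\d}{\d s}\big(sH(q,p,s,\nu)\big)=2H-p\frac{\d q}{\d s}-4\nu\sinh^2\left(\frac{q}{2}\right).
\]
Integrating this identity over $[t,\infty)$ produces the boundary term $[sH]_t^{\infty}=-tH(q,p,t,\nu)$ on the left, provided the endpoint contribution at $+\infty$ vanishes; splitting the right-hand side and substituting $\int_t^{\infty}p\,\frac{\d q}{\d s}\,\d s=S(t;\nu,\lambda)+\int_t^{\infty}H\,\d s$ from the definition of $S$ then yields \eqref{p:1} after one line of rearrangement.

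The one point genuinely requiring care is the vanishing of $sH$ as $s\to+\infty$, together with the absolute convergence of all three improper integrals appearing above. This I would control directly from the boundary condition \eqref{Watson}: the Watson-type integral governing the leading behavior of $q=\psi$ decays exponentially fast as $s\to+\infty$, whence $\sinh^2 q\sim q^2$, $p=-s\,\frac{\d q}{\d s}$, and consequently $sH$ all decay exponentially. This simultaneously forces the upper endpoint to drop out and makes each integrand absolutely integrable near $+\infty$, while integrability near $s=t$ is automatic by continuity. No input beyond the Hamiltonian structure \eqref{Hami} and the $+\infty$ asymptotics \eqref{Watson} is needed, and the remaining manipulations are routine.
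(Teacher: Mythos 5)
Your proof is correct and rests on the same key computation as the paper's: the derivative of $sH$ along the Hamiltonian flow, evaluated via $\frac{\d H}{\d s}=\frac{\partial H}{\partial s}$ and $\frac{\d q}{\d s}=-p/s$, together with exponential decay at $s=+\infty$ supplied by \eqref{Watson}. The only difference is organizational: you integrate the pointwise identity $\frac{\d}{\d s}(sH)=2H-p\frac{\d q}{\d s}-4\nu\sinh^2\left(\frac{q}{2}\right)$ forward to derive \eqref{p:1}, whereas the paper differentiates both sides of \eqref{p:1}, finds that each has $t$-derivative $-H$, and then kills the additive constant using the same decay at infinity.
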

\begin{proof} Simply $t$-differentiate the right-hand side in \eqref{p:1},
\begin{align*}
	\frac{\d}{\d t}\bigg[-tH(q,p,t)+S(t;\nu,\lambda)&+4\nu\int_t^{\infty}\sinh^2\left(\frac{q}{2}\right)\,\d s\bigg]=-H-t\frac{\partial H}{\partial t}-p\frac{\d q}{\d t}+H-4\nu\sinh^2\left(\frac{q}{2}\right)\\
	&=-\frac{t}{2}\sinh^2q+\frac{p^2}{2t}-4\nu\sinh^2\left(\frac{q}{2}\right)=-H,
\end{align*}
i.e. both sides in \eqref{p:1} have to match, modulo a possible $t$-independent additive term. However, from \eqref{Watson},
\begin{equation*}
	q(t;\nu,\lambda)\sim\frac{2\lambda}{(2t)^{\nu+\frac{1}{2}}}\Gamma\Big(\nu+\frac{1}{2}\Big)\e^{-t},\ \ t\rightarrow+\infty,
\end{equation*}
which implies that both sides in \eqref{p:1} decay exponentially fast at $t=+\infty$. Thus the potential additive term is in fact vanishing and \eqref{p:1} therefore established.
\end{proof}
The presence of the action integral $S(t;\nu,\lambda)$ is the main advantage of identity \eqref{p:1}; namely we can first shift $t$-integration to $\lambda$-integration (this step already appeared in \cite[Corollary $2.3$]{B}).
\begin{cor} For any fixed $t>0,\lambda\pi\in[0,1]$ and $\nu>-\frac{1}{2}$,
\begin{equation}\label{p:2}
	S(t;\nu,\lambda)=-\int_0^{\lambda}p\frac{\partial q}{\partial\lambda'}\,\d\lambda'.
\end{equation}
\end{cor}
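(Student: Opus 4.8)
The plan is to realize the promised shift from $t$-integration to $\lambda$-integration by differentiating the action $S(t;\nu,\lambda)$ in the parameter $\lambda$ at fixed $t,\nu$, and to show that the standard variational cancellation reduces $\partial S/\partial\lambda$ to a single boundary term at $s=t$. Granting that the solution $(q,p)=(q(s;\nu,\lambda),p(s;\nu,\lambda))$ of \eqref{Hami}, \eqref{Watson} is smooth in $\lambda$ with $\lambda$-derivatives decaying exponentially as $s\to+\infty$ (so that differentiation under the integral sign is legitimate), I would begin from
\begin{equation*}
	\frac{\partial S}{\partial\lambda}=\int_t^{\infty}\frac{\partial}{\partial\lambda}\left(p\frac{\d q}{\d s}-H(q,p,s,\nu)\right)\d s.
\end{equation*}

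Next I would expand the integrand by the chain rule. Using equality of mixed partials, $\partial_\lambda(\d q/\d s)=\d(\partial_\lambda q)/\d s$, one has
\begin{equation*}
	\frac{\partial}{\partial\lambda}\left(p\frac{\d q}{\d s}\right)=\frac{\partial p}{\partial\lambda}\frac{\d q}{\d s}+p\frac{\d}{\d s}\frac{\partial q}{\partial\lambda},\qquad \frac{\partial H}{\partial\lambda}=\frac{\partial H}{\partial q}\frac{\partial q}{\partial\lambda}+\frac{\partial H}{\partial p}\frac{\partial p}{\partial\lambda},
\end{equation*}
the last identity because $H$ carries no explicit $\lambda$-dependence. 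Substituting Hamilton's equations $\d q/\d s=\partial H/\partial p$ and $\d p/\d s=-\partial H/\partial q$ makes the two terms proportional to $\partial_\lambda p$ cancel, and the integrand collapses to the exact $s$-derivative
\begin{equation*}
	\frac{\partial}{\partial\lambda}\left(p\frac{\d q}{\d s}-H\right)=\frac{\d}{\d s}\left(p\frac{\partial q}{\partial\lambda}\right),
\end{equation*}
so that $\partial_\lambda S=\big[p\,\partial_\lambda q\big]_{s=t}^{s=\infty}$ after integrating in $s$.

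It then remains to evaluate the two boundary contributions and to fix the $\lambda$-constant. At $s=+\infty$ the term vanishes: from $q(s)\sim 2\lambda(2s)^{-\nu-\frac12}\Gamma(\nu+\frac12)\e^{-s}$ and $p=-s\,\d q/\d s$ (the first Hamilton equation) both $p$ and $\partial_\lambda q$ decay exponentially, whence $p\,\partial_\lambda q\to0$. Consequently $\partial_\lambda S=-\big(p\,\partial_\lambda q\big)\big|_{s=t}$, and integrating in $\lambda$ from $0$ while using $S(t;\nu,0)=0$ — which holds since $\lambda=0$ forces $q\equiv0\equiv p$ and hence a vanishing action integrand — yields exactly \eqref{p:2}. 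The main obstacle is not the algebra, which is the textbook variation-of-action computation, but the analytic justification of the two interchanges invoked above: differentiating under the $\int_t^{\infty}$ sign and discarding the boundary term at $s=+\infty$. Both reduce to establishing that $q,p$ and their $\lambda$-derivatives are continuous in $\lambda$ and decay exponentially in $s$ uniformly for $\lambda$ in compact subsets of $[0,\frac1\pi]$, which should be read off from the $t\to+\infty$ data underlying \eqref{Watson}.
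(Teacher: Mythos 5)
Your proof is correct and is essentially the argument the paper relies on: the paper's own proof simply defers to \cite[Corollary~$2.3$]{B}, which is precisely this variation-of-action computation ($\partial_\lambda S$ collapses via Hamilton's equations to the boundary term $-p\,\partial_\lambda q\big|_{s=t}$, the contribution at $s=+\infty$ dying by the exponential decay in \eqref{Watson}, and the integration constant fixed by $q(t;\nu,0)\equiv 0$, hence $S(t;\nu,0)=0$). The same cancellation mechanics also appear explicitly in the paper's proof of the subsequent $\nu$-derivative identity \eqref{p:3}, so your write-up matches the intended route, just with the details of the cited reference spelled out.
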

\begin{proof} Compare the proof of \cite[Corollary $2.3$]{B}, here subject to \eqref{Watson} and $q(t;\nu,0)\equiv 0$.
\end{proof}
Secondly, the remaining integral $\int_t^{\infty}\sinh^2(\frac{q}{2})\,\d s$ in \eqref{p:1} can also be computed in terms of $S(t;\nu,\lambda)$.
\begin{cor} For any fixed $t>0,\lambda\pi\in[0,1]$ and $\nu>-\frac{1}{2}$,
\begin{equation}\label{p:3}
	\int_t^{\infty}\sinh^2\left(\frac{q}{2}\right)\,\d s=-\frac{1}{4}\left(p\frac{\partial q}{\partial\nu}+\frac{\partial S}{\partial\nu}\right).
\end{equation}
\end{cor}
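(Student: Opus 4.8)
The plan is to read off the sought integral as the explicit $\nu$-derivative of the Hamiltonian and then to exploit the fact that the action $S(t;\nu,\lambda)$ is an integral of a quantity built from a solution of Hamilton's equations. Since $\partial H/\partial\nu=4\sinh^2(\frac{q}{2})$ when $q,p,s$ are held fixed, the integrand $\sinh^2(\frac{q}{2})$ we wish to isolate is, up to the factor $4$, precisely the explicit $\nu$-dependence of $H$. So I would differentiate $S(t;\nu,\lambda)=\int_t^{\infty}\big(p\frac{\d q}{\d s}-H(q,p,s,\nu)\big)\,\d s$ with respect to $\nu$ under the integral sign and sort the resulting integrand into its explicit and implicit parts.

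Carrying this out, the $\nu$-variation of $p\frac{\d q}{\d s}-H$ produces the four implicit terms $\frac{\partial p}{\partial\nu}\frac{\d q}{\d s}$, $p\frac{\partial}{\partial\nu}\frac{\d q}{\d s}$, $-\frac{\partial H}{\partial q}\frac{\partial q}{\partial\nu}$, $-\frac{\partial H}{\partial p}\frac{\partial p}{\partial\nu}$, together with the explicit term $-4\sinh^2(\frac{q}{2})$. Invoking Hamilton's equations $\frac{\d q}{\d s}=\frac{\partial H}{\partial p}$ and $\frac{\d p}{\d s}=-\frac{\partial H}{\partial q}$, the first and fourth implicit terms cancel (both equal $\frac{\partial p}{\partial\nu}\frac{\d q}{\d s}$ up to sign), while the second and third combine after interchanging the mixed partials $\frac{\partial}{\partial\nu}\frac{\d q}{\d s}=\frac{\d}{\d s}\frac{\partial q}{\partial\nu}$ into the total derivative $p\frac{\d}{\d s}\frac{\partial q}{\partial\nu}+\frac{\d p}{\d s}\frac{\partial q}{\partial\nu}=\frac{\d}{\d s}\big(p\frac{\partial q}{\partial\nu}\big)$. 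Integrating this exact derivative over $[t,\infty)$ yields the boundary contribution $\big[p\frac{\partial q}{\partial\nu}\big]_t^{\infty}=-p(t;\nu,\lambda)\frac{\partial q}{\partial\nu}(t;\nu,\lambda)$, so that $\frac{\partial S}{\partial\nu}=-p\frac{\partial q}{\partial\nu}-4\int_t^{\infty}\sinh^2(\frac{q}{2})\,\d s$, which rearranges to \eqref{p:3}.

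The main obstacle is analytic rather than algebraic: one must justify differentiation under the integral sign and, more delicately, confirm that the boundary term at $s=+\infty$ actually vanishes, i.e.\ that $p\frac{\partial q}{\partial\nu}\to 0$ as $s\to+\infty$. I expect this to follow from the exponential decay already recorded in the proof of the Proposition, where the boundary condition \eqref{Watson} forces $q(s;\nu,\lambda)\sim 2\lambda(2s)^{-\nu-\frac12}\Gamma(\nu+\frac12)\e^{-s}$; the same estimate, differentiated in $s$ and in $\nu$, controls $p$ and $\frac{\partial q}{\partial\nu}$ and shows their product decays like $\e^{-2s}$ up to algebraic factors. This is the exact analogue of how exponential decay was used earlier to rule out a spurious additive constant, so the verification should be a short repetition of that argument rather than a new difficulty.
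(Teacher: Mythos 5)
Your proposal is correct and is essentially identical to the paper's own proof: the paper likewise differentiates $S$ under the integral sign, cancels the implicit terms via Hamilton's equations \eqref{Hami}, combines the survivors into the total derivative $\frac{\d}{\d s}\big(p\frac{\partial q}{\partial\nu}\big)$, and integrates by parts, using the $\nu$-differentiable asymptotics \eqref{Watson} to kill the boundary term at $s=+\infty$. Your closing remarks on justifying the interchange of differentiation and integration and on the exponential decay of $p\frac{\partial q}{\partial\nu}$ simply make explicit what the paper compresses into its final sentence.
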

\begin{proof} We have
\begin{eqnarray*}
	\frac{\partial S}{\partial\nu}&=&\int_t^{\infty}\left[\frac{\partial p}{\partial\nu}\frac{\d q}{\d s}+p\frac{\partial}{\partial\nu}\Big(\frac{\d q}{\d s}\Big)-\frac{\partial H}{\partial q}\frac{\partial q}{\partial\nu}-\frac{\partial H}{\partial p}\frac{\partial p}{\partial\nu}-4\sinh^2\left(\frac{q}{2}\right)\right]\,\d s\\
	&\stackrel{\eqref{Hami}}{=}&\int_t^{\infty}\left[p\frac{\partial}{\partial\nu}\Big(\frac{\d q}{\d s}\Big)+\frac{\d p}{\d s}\frac{\partial q}{\partial\nu}-4\sinh^2\left(\frac{q}{2}\right)\right]\,\d s=-p\frac{\partial q}{\partial\nu}-4\int_t^{\infty}\sinh^2\left(\frac{q}{2}\right)\,\d s,
\end{eqnarray*}
where we integrated by parts in the last equality and used the $\nu$-differentiable asymptotics \eqref{Watson}.
\end{proof}
Merging \eqref{p:1}, \eqref{p:2} and \eqref{p:3} with \eqref{e:7} we arrive at the following exact identity: for any fixed $t>0,\lambda\pi\in[0,1]$ and $\nu>-\frac{1}{2}$,
\begin{equation}\label{p:4}
	\tau(t;\nu,\lambda)=\exp\left[-\frac{t}{2}H(q,p,t,\nu)+\frac{1}{2}S(t;\nu,\lambda)-\frac{\nu}{4}\left(p\frac{\partial q}{\partial\nu}+\frac{\partial S}{\partial\nu}\right)\right]\cosh\Big(\frac{q}{2}\Big),
\end{equation}
where the action $S$ is given by the $\lambda$-integral in \eqref{p:2}. This completes our collection of exact formul\ae\, and we will now derive the small $t$-expansions of $S(t;\nu,\lambda)$ and $H(q,p,t,\nu)$.
\section{Proof of Theorem \ref{res:1} - asymptotic identities}\label{sec:3}
Fix $\lambda\pi\in(0,1),\nu>-\frac{1}{2}$ throughout such that $0<\sigma<1+2\nu$ (this is sufficient since \eqref{prob} holds true for all $\lambda\pi\in(0,1]$ and $\nu>-\frac{1}{2}$ such that $0<\sigma< 1+2\nu$). From \eqref{Hami} and \eqref{e:5}, as $t\downarrow 0$,
\begin{equation*}
	\frac{\partial q}{\partial\lambda}=-\frac{\partial\sigma}{\partial\lambda}\ln t-\frac{1}{B}\frac{\partial B}{\partial\lambda}+\mathcal{O}\big(t^{1-\sigma}\ln t\big);\ \ \ \ \ \ p=\sigma+\mathcal{O}\big(t^{1-\sigma}\big);
\end{equation*}
so that with \eqref{p:2} and $\sigma(0)=0$,
\begin{equation}\label{p:5}
	S(t;\nu,\lambda)=\frac{\sigma^2}{2}\ln t+\int_0^{\lambda}\frac{\sigma(\lambda')}{B(\nu,\lambda')}\frac{\partial B}{\partial\lambda'}(\nu,\lambda')\,\d\lambda'+\mathcal{O}\big(t^{1-\sigma}\big),\ \ t\downarrow 0.
\end{equation}
The last expansion can be further broken down by recalling \eqref{e:6},
\begin{equation}\label{inter}
	S(t;\nu,\lambda)=\frac{\sigma^2}{2}\ln t-\frac{3\sigma^2}{2}\ln 2+J_+(\lambda,\nu)-J_-(\lambda,\nu)-2J_+(\lambda,0)+2J_-(\lambda,0)+\mathcal{O}\big(t^{1-\sigma}\big),
\end{equation}
where
\begin{equation*}
	J_{\pm}(\nu,\lambda):=\int_0^{\lambda}\sigma(\lambda')\frac{\partial}{\partial\lambda'}\ln\Gamma\big(\nu+\frac{1}{2}(1\pm\sigma(\lambda'))\big)\,\d\lambda'.
\end{equation*}
\begin{prop}\label{imp} For any $\lambda\pi\in(0,1),\nu>-\frac{1}{2}$ such that $\sigma<1+2\nu$,
\begin{align*}
	J_+(\nu,\lambda)-J_-(\nu,\lambda)=&\,\frac{\sigma^2}{2}+2\ln\left[\frac{G(1-s+\nu)G(1+s+\nu)}{G^2(\frac{1}{2}+\nu)}\right]+(1-2\nu)\ln\left[\frac{\Gamma(1-s+\nu)\Gamma(1+s+\nu)}{\Gamma^2(\frac{1}{2}+\nu)}\right]\\
	&-2\ln\Gamma(1+s+\nu)+(1+2\nu)\ln(s+\nu),\ \ \ s=\frac{1}{2}(1-\sigma).
\end{align*}
\end{prop}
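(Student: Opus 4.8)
The plan is to strip both $J_{\pm}$ down to integrals of $\ln\Gamma$ and then evaluate them with the classical Alexeiewsky--Kinkelin (Barnes) integral formula. First I would change the integration variable from $\lambda'$ to $\sigma'=\sigma(\lambda')$, which is legitimate because $\sigma=\frac{2}{\pi}\arcsin(\pi\lambda')$ is smooth and strictly increasing on $[0,\lambda]$ with $\sigma(0)=0$. Under this substitution $\sigma(\lambda')\,\frac{\partial}{\partial\lambda'}\ln\Gamma(\cdots)\,\d\lambda'=\sigma'\,\d\!\left[\ln\Gamma\big(\nu+\tfrac12(1\pm\sigma')\big)\right]$, so that
\[
	J_{\pm}(\nu,\lambda)=\int_0^{\sigma}\sigma'\,\d\!\left[\ln\Gamma\big(\nu+\tfrac12(1\pm\sigma')\big)\right].
\]
Integrating by parts, the boundary contribution at $\sigma'=0$ vanishes (because of the factor $\sigma'$) and the one at $\sigma'=\sigma$ equals $\sigma\ln\Gamma(\nu+\tfrac12(1\pm\sigma))$, leaving the bulk term $-\int_0^{\sigma}\ln\Gamma(\nu+\tfrac12(1\pm\sigma'))\,\d\sigma'$.

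Next I would substitute $x=\nu+\tfrac12(1\pm\sigma')$ in each remaining integral, which turns them into $\pm2\int\ln\Gamma(x)\,\d x$ between the explicit limits $\nu+\tfrac12$ and $\nu+\tfrac12\pm\tfrac{\sigma}{2}$, where $\nu+\tfrac12+\tfrac{\sigma}{2}=1-s+\nu$ and $\nu+\tfrac12-\tfrac{\sigma}{2}=s+\nu$. Here the hypothesis $s+\nu>0$, equivalently $\sigma<1+2\nu$, is exactly what keeps $\ln\Gamma(x)$ pole-free along every integration path, so the substitution is valid. I would then invoke the Alexeiewsky--Kinkelin formula
\[
	\int_0^z\ln\Gamma(x)\,\d x=\frac{z(1-z)}{2}+\frac{z}{2}\ln(2\pi)+z\ln\Gamma(z)-\ln G(1+z)
\]
to express the definite integrals through $\Gamma$ and $G$, and afterwards apply the Barnes functional equation $G(z+1)=\Gamma(z)G(z)$ to shift the arguments $2-s+\nu$ and $\tfrac32+\nu$ produced by the formula down to the $G(1-s+\nu)$, $G(1+s+\nu)$, $G(\tfrac12+\nu)$ appearing in the claim, at the cost of generating surplus $\ln\Gamma$ terms.

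Finally I would collect coefficients, checking four strands separately: (i) the $\ln(2\pi)$ contributions cancel since their weights $-(1-s+\nu)-(s+\nu)+(2\nu+1)$ sum to zero; (ii) the algebraic pieces $\tfrac{z(1-z)}{2}$ combine to $2s^2-2s+\tfrac12=\tfrac{\sigma^2}{2}$ after using $\sigma=1-2s$; (iii) the surviving $G$-terms assemble into $2\ln\big[G(1-s+\nu)G(1+s+\nu)/G^2(\tfrac12+\nu)\big]$; and (iv) after rewriting $\Gamma(s+\nu)=\Gamma(1+s+\nu)/(s+\nu)$ --- which is precisely what produces the $(1+2\nu)\ln(s+\nu)$ summand --- the $\ln\Gamma$ weights recombine into $(1-2\nu)\ln\Gamma(1-s+\nu)-(1+2\nu)\ln\Gamma(1+s+\nu)-2(1-2\nu)\ln\Gamma(\tfrac12+\nu)$, which is the remaining content of the statement.

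I expect the main obstacle to be the bookkeeping in this last step: keeping the several $z\ln\Gamma(z)$ weights, the boundary weight $\sigma=1-2s$, and the shift-induced $\ln\Gamma$ terms straight at once, and getting every sign right so that the factors $\Gamma(s+\nu)$, $\Gamma(1-s+\nu)$ and $\Gamma(\tfrac12+\nu)$ recombine \emph{exactly} as written. There is little conceptual difficulty once the Alexeiewsky--Kinkelin formula is in hand, but the computation is unforgiving of sign or weight slips, so I would organize it by matching the $\ln(2\pi)$, algebraic, $\ln G$, and $\ln\Gamma$ channels one at a time as above.
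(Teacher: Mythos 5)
Your proposal is correct and takes essentially the same route as the paper's proof: integrate by parts to move the derivative off the log-gamma factors, change variables to reduce everything to definite integrals of $\ln\Gamma$, and evaluate these via the Barnes integral formula \eqref{Barnes}, recombining with the recurrences $G(z+1)=\Gamma(z)G(z)$ and $\Gamma(1+z)=z\Gamma(z)$. The only differences are bookkeeping choices: you integrate $\ln\Gamma(x)$ and shift the resulting $G$-arguments afterwards, whereas the paper splits off $\ln(s+\nu)$ via $\Gamma(s+\nu)=\Gamma(1+s+\nu)/(s+\nu)$ \emph{before} integrating by parts and then uses the formula for $\int_0^z\ln\Gamma(1+x)\,\d x$ directly.
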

\begin{proof} Set $s=\frac{1}{2}(1-\sigma)\in(0,\frac{1}{2})$ so that $1-s=\frac{1}{2}(1+\sigma)\in(\frac{1}{2},1)$ and
\begin{equation*}
	J_+(\nu,\lambda)=\int_0^{\lambda}\sigma(\lambda')\frac{\partial}{\partial\lambda'}\ln\Gamma\big(1-s(\lambda')+\nu\big)\,\d\lambda',
\end{equation*}
as well as
\begin{equation*}
	J_-(\nu,\lambda)=\int_0^{\lambda}\sigma(\lambda')\frac{\partial}{\partial\lambda'}\Gamma\big(1+s(\lambda')+\nu\big)\,\d\lambda'-\int_0^{\lambda}\sigma(\lambda')\frac{\partial}{\partial\lambda'}\ln\big(s(\lambda)'+\nu\big)\,\d\lambda'.
\end{equation*}
Integrating by parts, we have thus
\begin{align}
	J_+(\nu,\lambda)-J_-(\nu,\lambda)=\sigma\ln\left\{\frac{\Gamma(1-s+\nu)}{\Gamma(1+s+\nu)}\right\}&-\int_0^{\lambda}\left[\frac{\d\sigma}{\d\lambda'}(\lambda')\right]\ln\left\{\frac{\Gamma(1-s(\lambda')+\nu)}{\Gamma(1+s(\lambda')+\nu)}\right\}\,\d\lambda'\nonumber\\
	&+\sigma\ln(s+\nu)-\int_0^{\lambda}\left[\frac{\d\sigma}{\d\lambda'}(\lambda')\right]\ln\big(s(\lambda')+\nu\big)\,\d\lambda'.\label{p:6}
\end{align}
Next,
\begin{equation}\label{p:7}
	\int_0^{\lambda}\left[\frac{\d\sigma}{\d\lambda'}(\lambda')\right]\ln\big(s(\lambda')+\nu\big)\,\d\lambda'=2\int_{\nu+s}^{\nu+\frac{1}{2}}\ln x\,\d x=2\Big[x\ln x-x\Big]\bigg|_{x=\nu+s}^{\nu+\frac{1}{2}},
\end{equation}
and
\begin{equation}\label{p:8}
	\int_0^{\lambda}\left[\frac{\d\sigma}{\d\lambda'}(\lambda')\right]\ln\left\{\frac{\Gamma(1-s(\lambda')+\nu)}{\Gamma(1+s(\lambda')+\nu)}\right\}\,\d\lambda'=2\int_{\nu-\frac{1}{2}}^{\nu-s}\ln\Gamma(1+x)\,\d x+2\int_{\nu+\frac{1}{2}}^{\nu+s}\ln\Gamma(1+x)\,\d x.
\end{equation}
At this point the Barnes G-function enters our computation via the formula
\begin{equation}\label{Barnes}
	\int_0^z\ln\Gamma(1+x)\,\d x=\frac{z}{2}\ln(2\pi)-\frac{z}{2}(z+1)+z\ln\Gamma(1+z)-\ln G(1+z),\ \ z\in\mathbb{C}:\ \Re z>-1,
\end{equation}
and, combining \eqref{p:6}, \eqref{p:7} with \eqref{p:8} and \eqref{Barnes}, after straightforward simplifications,
\begin{align*}
	J_+(\nu,\lambda)-J_-(\nu,\lambda)=&\,\frac{\sigma^2}{2}+2\ln\left[\frac{G(1-s+\nu)G(1+s+\nu)}{G^2(\frac{1}{2}+\nu)}\right]+(1-2\nu)\ln\left[\frac{\Gamma(1-s+\nu)\Gamma(1+s+\nu)}{\Gamma^2(\frac{1}{2}+\nu)}\right]\\
	&-2\ln\Gamma(1+s+\nu)+(1+2\nu)\ln(s+\nu),
\end{align*}
which proves the Proposition.
\end{proof}
Next, we use Proposition \ref{imp} back in \eqref{inter}, combine it with the well-known special values
\begin{equation}\label{Gspecial}
	\Gamma\left(\frac{1}{2}\right)=\sqrt{\pi},\ \ \ \ 2\ln G\left(\frac{1}{2}\right)=3\zeta'(-1)-\frac{1}{2}\ln\pi+\frac{1}{12}\ln 2
\end{equation}
and arrive at the following small $t$-expansion for the action integral.
\begin{prop}\label{action} For any $\lambda\pi\in(0,1),\nu>-\frac{1}{2}$, as $t\downarrow 0$, with $s+\nu>0$,
\begin{align}
	S(t;\nu,\lambda)=&\frac{\sigma^2}{2}\ln t-\frac{3\sigma^2}{2}\ln 2-\frac{\sigma^2}{2}+6\zeta'(-1)+\frac{1}{6}\ln 2+2\ln\left[\frac{G(1-s+\nu)G(1+s+\nu)}{G^2(1-s)G^2(1+s)}\right]\nonumber\\
	&+2\ln\left[\frac{G^2(\frac{1}{2})\Gamma(\frac{1}{2})}{G^2(\frac{1}{2}+\nu)\Gamma(\frac{1}{2}+\nu)}\right]+\ln\left[\frac{\Gamma^2(s)}{\Gamma^2(1-s)}\frac{\Gamma(1-s+\nu)}{\Gamma(1+s+\nu)}\right]-2\nu\ln\left[\frac{\Gamma(1-s+\nu)\Gamma(1+s+\nu)}{\Gamma^2(\frac{1}{2}+\nu)}\right]\nonumber\\
	&+(1+2\nu)\ln(s+\nu)+\mathcal{O}\big(t^{1-\sigma}\big),\label{p:9}
\end{align}
and the error term is $(t,\nu,\lambda)$-differentiable.
\end{prop}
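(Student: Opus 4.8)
The plan is to feed Proposition~\ref{imp} into the partially simplified expansion \eqref{inter} and then clean up the resulting special-function expressions. Since \eqref{inter} reads $S(t;\nu,\lambda)=\frac{\sigma^2}{2}\ln t-\frac{3\sigma^2}{2}\ln 2+[J_+(\nu,\lambda)-J_-(\nu,\lambda)]-2[J_+(0,\lambda)-J_-(0,\lambda)]+\mathcal{O}(t^{1-\sigma})$, I would apply Proposition~\ref{imp} twice: once at the given $\nu$ and once at $\nu=0$, where the admissibility condition $s+\nu>0$ holds automatically since $s>0$. The $\nu=0$ evaluation is simply the specialization $1\pm s+\nu\mapsto 1\pm s$, $\frac{1}{2}+\nu\mapsto\frac{1}{2}$, $s+\nu\mapsto s$ and $1\mp 2\nu\mapsto 1$ in the formula of Proposition~\ref{imp}, so it produces an expression involving only $G(1\pm s)$, $G(\frac{1}{2})$, $\Gamma(1\pm s)$, $\Gamma(\frac{1}{2})$ and $\ln s$.

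First I would collect the purely polynomial-in-$\sigma$ contributions: the two copies of $\frac{\sigma^2}{2}$ coming from Proposition~\ref{imp} combine as $\frac{\sigma^2}{2}-2\cdot\frac{\sigma^2}{2}=-\frac{\sigma^2}{2}$, which together with the pre-existing terms furnishes the $\frac{\sigma^2}{2}\ln t-\frac{3\sigma^2}{2}\ln 2-\frac{\sigma^2}{2}$ portion of \eqref{p:9}. Next I would separate the Barnes-$G$ factors from the $\Gamma$-factors. The $G$-terms assemble into the ratio $2\ln[G(1-s+\nu)G(1+s+\nu)/(G^2(1-s)G^2(1+s))]$ together with a leftover $4\ln G^2(\frac{1}{2})-2\ln G^2(\frac{1}{2}+\nu)$; here I would invoke the special value $2\ln G(\frac{1}{2})=3\zeta'(-1)-\frac{1}{2}\ln\pi+\frac{1}{12}\ln 2$ from \eqref{Gspecial} to convert the excess $2\ln G^2(\frac{1}{2})$ into the explicit $6\zeta'(-1)+\frac{1}{6}\ln 2$ appearing in \eqref{p:9}, at the cost of a spurious $-\ln\pi$ that must later disappear.

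The main (and only genuinely delicate) point is then the bookkeeping of the $\Gamma$-factors and the cancellation of this $-\ln\pi$. Tracking the coefficients of $\ln\Gamma(1\pm s+\nu)$, $\ln\Gamma(\frac{1}{2}+\nu)$ and $\ln(s+\nu)$, I expect them to match \eqref{p:9} term by term, leaving a residual $2\ln\Gamma(\frac{1}{2})+2\ln\Gamma(1+s)-2\ln\Gamma(s)-2\ln s$. The identity $\Gamma(\frac{1}{2})=\sqrt{\pi}$ from \eqref{Gspecial} turns $2\ln\Gamma(\frac{1}{2})$ into $+\ln\pi$, which exactly cancels the spurious $-\ln\pi$ left over from the $G(\frac{1}{2})$ reduction, while the functional equation $\Gamma(1+s)=s\,\Gamma(s)$ collapses $2\ln\Gamma(1+s)-2\ln\Gamma(s)-2\ln s$ to zero. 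Assembling the surviving pieces then reproduces \eqref{p:9} precisely. Finally, the differentiability of the $\mathcal{O}(t^{1-\sigma})$ remainder in $(t,\nu,\lambda)$ is inherited directly from the $(t,\nu,\lambda)$-differentiable error terms in \eqref{e:5}, which propagate through \eqref{p:5} and \eqref{inter}, so no separate argument is needed for that claim.
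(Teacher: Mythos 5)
Your proposal is correct and follows essentially the same route as the paper: substitute Proposition \ref{imp} (at $\nu$ and at $\nu=0$) into \eqref{inter}, then simplify via the special values \eqref{Gspecial} and the functional equation $\Gamma(1+s)=s\,\Gamma(s)$, with the $\ln\pi$ contributions from $G(\tfrac12)$ and $\Gamma(\tfrac12)$ cancelling exactly as you describe. In fact your write-up makes explicit the coefficient bookkeeping that the paper leaves as an unstated "straightforward" combination, and your verification of the term-by-term match with \eqref{p:9} is accurate.
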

Note that all logarithms above are well-defined and real-valued since $\Gamma(x)>0,G(x)>0$ for $x>0$ and by our assumption $s+\nu>0$. After the action integral, the small $t$-behavior of the Hamiltonian is much easier, see \eqref{Hami} and \eqref{e:5},
\begin{prop}\label{Hamiltonian} For any $\lambda\pi\in(0,1),\nu>-\frac{1}{2}$, as $t\downarrow 0$, with $s+\nu>0$,
\begin{equation*}
	H(q,p,t,\nu)=-\frac{\sigma^2}{2t}+\mathcal{O}\big(t^{-\sigma}\big).
\end{equation*}
\end{prop}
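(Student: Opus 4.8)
The plan is to substitute the known small-$t$ expansion \eqref{e:5} directly into the three terms of the Hamiltonian \eqref{Hami} and keep track of orders. First I would convert \eqref{e:5} into asymptotics for $q=\psi$. Since $u(t;\nu,\lambda)=\e^{-\psi(2t;\nu,\lambda)}$, the relation $\e^{-q}=u(t/2;\nu,\lambda)$ combined with \eqref{e:5} gives $q=\psi(t;\nu,\lambda)=-\sigma\ln t-\ln B+\mathcal{O}(t^{1-\sigma})$ as $t\downarrow 0$; in particular $\e^{q}\sim (Bt^{\sigma})^{-1}$ and $\e^{-q}\sim Bt^{\sigma}\to 0$, so $q\to+\infty$.

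Next I would expand the two hyperbolic terms. Writing $\sinh^2(\frac{q}{2})=\frac14(\e^{q}-2+\e^{-q})$ and $\sinh^2 q=\frac14(\e^{2q}-2+\e^{-2q})$, the dominant exponentials give $4\nu\sinh^2(\frac{q}{2})\sim\frac{\nu}{B}t^{-\sigma}=\mathcal{O}(t^{-\sigma})$ and $\frac{t}{2}\sinh^2 q\sim\frac{1}{8B^2}t^{1-2\sigma}$. For the momentum term I would use the expansion $p=\sigma+\mathcal{O}(t^{1-\sigma})$ recorded at the start of Section \ref{sec:3} (equivalently $p=-t\,\d q/\d t$ read off from \eqref{Hami}), which yields $-\frac{p^2}{2t}=-\frac{\sigma^2}{2t}+\mathcal{O}(t^{-\sigma})$.

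Summing the three pieces leaves $H=-\frac{\sigma^2}{2t}+\frac{t}{2}\sinh^2 q+4\nu\sinh^2(\frac{q}{2})+\mathcal{O}(t^{-\sigma})$, so the only thing remaining is to verify that both hyperbolic contributions are $\mathcal{O}(t^{-\sigma})$. This single estimate is the main (and only) obstacle: although $\sinh^2 q$ blows up like $t^{-2\sigma}$, the explicit prefactor $t$ makes $\frac{t}{2}\sinh^2 q=\mathcal{O}(t^{1-2\sigma})$, and since $0<\sigma<1$ we have $1-2\sigma>-\sigma$, whence $t^{1-2\sigma}=o(t^{-\sigma})$; the $\sinh^2(\frac{q}{2})$ term is manifestly $\mathcal{O}(t^{-\sigma})$. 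Both are therefore absorbed into the stated error and the Proposition follows. I would emphasize that the hypothesis $\sigma<1$ is exactly what prevents the $\frac{t}{2}\sinh^2 q$ term from competing with the leading singularity $-\frac{\sigma^2}{2t}$.
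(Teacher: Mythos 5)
Your proposal is correct and is exactly the argument the paper intends: the paper gives no separate proof of this Proposition, simply pointing to \eqref{Hami} and \eqref{e:5}, and your computation—reading $q=-\sigma\ln t-\ln B+\mathcal{O}(t^{1-\sigma})$ and $p=\sigma+\mathcal{O}(t^{1-\sigma})$ from \eqref{e:5}, then checking that $\frac{t}{2}\sinh^2 q=\mathcal{O}(t^{1-2\sigma})=o(t^{-\sigma})$ and $4\nu\sinh^2(\frac{q}{2})=\mathcal{O}(t^{-\sigma})$ while $-\frac{p^2}{2t}$ supplies the leading term—fills in precisely that omitted calculation. Your closing observation that $\sigma<1$ is what keeps the $\sinh^2 q$ term subleading is the right point to emphasize.
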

Next, we merge \eqref{p:3}, \eqref{p:9} and compute the outstanding integral $\int_t^{\infty}\sinh^2(\frac{q}{2})\,\d s$ in the $t\downarrow 0$ limit.
\begin{prop}\label{deriv} For any $\lambda\pi\in(0,1),\nu>-\frac{1}{2}$, as $t\downarrow 0$, with $s+\nu>0$,
\begin{equation}\label{p:10}
	p\frac{\partial q}{\partial\nu}+\frac{\partial S}{\partial\nu}=-2\ln\left[\frac{\Gamma(1-s+\nu)\Gamma(1+s+\nu)}{\Gamma^2(\frac{1}{2}+\nu)}\right]+2\ln(s+\nu)+\mathcal{O}\big(t^{1-\sigma}\big).
\end{equation}
\end{prop}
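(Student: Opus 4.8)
The plan is to read off the left-hand side of \eqref{p:10} from the exact identity \eqref{p:3}, so that it suffices to expand $p\,\partial_\nu q+\partial_\nu S$ as $t\downarrow 0$. I would get $p\,\partial_\nu q$ from the boundary data \eqref{e:5}, \eqref{e:6}, and, rather than differentiating the fully reduced expansion \eqref{p:9} (which would force me to differentiate the Barnes $G$-factors through \eqref{Barnes} and chase numerous cancellations), I would compute $\partial_\nu S$ from the intermediate representation \eqref{inter}. In \eqref{inter} every term except $J_+(\nu,\lambda)-J_-(\nu,\lambda)$ is independent of $\nu$, and the error is $\nu$-differentiable, so $\partial_\nu S=\partial_\nu\bigl(J_+-J_-\bigr)+\mathcal{O}(t^{1-\sigma})$.

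For the first piece I would use $q=-\ln u(\tfrac{t}{2};\nu,\lambda)$ and \eqref{e:5} to obtain $q=-\ln B-\sigma\ln t+\mathcal{O}(t^{1-\sigma})$. Since $\sigma$ does not depend on $\nu$, this gives $\partial_\nu q=-\partial_\nu\ln B+\mathcal{O}(t^{1-\sigma})$, while \eqref{e:6} and $s=\tfrac12(1-\sigma)$ yield $\partial_\nu\ln B=\psi_0(1-s+\nu)-\psi_0(s+\nu)$. Together with $p=\sigma+\mathcal{O}(t^{1-\sigma})$ this produces $p\,\partial_\nu q=\sigma\bigl[\psi_0(s+\nu)-\psi_0(1-s+\nu)\bigr]+\mathcal{O}(t^{1-\sigma})$.

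For $\partial_\nu\bigl(J_+-J_-\bigr)$ I would differentiate under the integral sign in the definition of $J_\pm$, which replaces $\ln\Gamma$ by $\psi_0$, and then integrate by parts in $\lambda'$ exactly as in the proof of Proposition \ref{imp}, using $\sigma(0)=0$. Each $\partial_\nu J_\pm$ then splits into a boundary term $\sigma\,\psi_0\bigl(\nu+\tfrac12(1\pm\sigma)\bigr)$ and a residual integral $-\int_0^\lambda\tfrac{\d\sigma}{\d\lambda'}\psi_0\bigl(\nu+\tfrac12(1\pm\sigma(\lambda'))\bigr)\,\d\lambda'$. The decisive observation is that the boundary terms of $\partial_\nu(J_+-J_-)$ equal $\sigma\bigl[\psi_0(1-s+\nu)-\psi_0(s+\nu)\bigr]$, i.e. exactly the negative of $p\,\partial_\nu q$; adding the two pieces cancels them and leaves only $-\int_0^\lambda\tfrac{\d\sigma}{\d\lambda'}\bigl[\psi_0(1-s(\lambda')+\nu)-\psi_0(s(\lambda')+\nu)\bigr]\,\d\lambda'+\mathcal{O}(t^{1-\sigma})$.

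Finally I would substitute $s'=\tfrac12(1-\sigma(\lambda'))$, which runs from $\tfrac12$ down to $s$ with $\d\sigma=-2\,\d s'$, and evaluate the resulting elementary integral via $\tfrac{\d}{\d s'}\ln\Gamma(1-s'+\nu)=-\psi_0(1-s'+\nu)$ and $\tfrac{\d}{\d s'}\ln\Gamma(s'+\nu)=\psi_0(s'+\nu)$; a single rewrite $\Gamma(1+s+\nu)=(s+\nu)\Gamma(s+\nu)$ then reproduces \eqref{p:10}. I expect the main obstacle to be the bookkeeping establishing the exact cancellation of the integration-by-parts boundary terms against $p\,\partial_\nu q$, which hinges on the precise leading asymptotics $p\sim\sigma$ and $\partial_\nu q\sim-\partial_\nu\ln B$ and is the structural heart of the argument (it mirrors the earlier trade of $t$-integration for $\lambda$-integration in \eqref{p:2}); once it is secured the remaining integral is routine.
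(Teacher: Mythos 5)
Your proof is correct, but it reaches \eqref{p:10} by a genuinely different route than the paper. The paper computes $p\,\partial_\nu q$ exactly as you do (via \eqref{e:5}, \eqref{e:6}, yielding \eqref{p:12}), but it obtains $\partial_\nu S$ by $\nu$-differentiating the already-reduced expansion \eqref{p:9} of Proposition \ref{action}; this forces one to differentiate through the Barnes factors using $G'(z)=G(z)\left[(z-1)\psi_0(z)-z+\frac{1}{2}\ln(2\pi)+\frac{1}{2}\right]$ (a consequence of \eqref{Barnes}), after which the terms recombine into \eqref{p:11} and the digamma part cancels against \eqref{p:12}. You instead differentiate the intermediate representation \eqref{inter}, push $\partial_\nu$ under the integrals defining $J_\pm$, integrate by parts, and cancel the resulting boundary terms against $p\,\partial_\nu q$; the leftover integral $2\int_{1/2}^{s}\left[\psi_0(1-s'+\nu)-\psi_0(s'+\nu)\right]\d s'$ is elementary, and the rewrite $\Gamma(1+s+\nu)=(s+\nu)\Gamma(s+\nu)$ produces \eqref{p:10}. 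It is worth noting that the two cancellations are really the same phenomenon: the digamma terms in \eqref{p:11} arise precisely from $\nu$-differentiating the integration-by-parts boundary terms in \eqref{p:6} from the proof of Proposition \ref{imp}, which are exactly your boundary terms. What your route buys is that you never differentiate a Barnes G-function and you bypass the bookkeeping of Proposition \ref{action} entirely (you only need \eqref{inter}); what the paper's route buys is brevity, since Proposition \ref{action} is already established and a single differentiation of \eqref{p:9} finishes the job. Both arguments lean equally on the $(t,\nu,\lambda)$-differentiability of the error terms inherited from \eqref{e:5}, which is what licenses differentiating through the $\mathcal{O}\big(t^{1-\sigma}\big)$ remainder in either \eqref{inter} or \eqref{p:9}.
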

\begin{proof} By definition, the digamma function \cite[$5.2.2$]{NIST} equals $\psi_0(z):=\Gamma'(z)/\Gamma(z)$ for $z\notin\mathbb{Z}_{\leq 0}$. Moreover
\begin{equation*}
	G'(z)=G(z)\left[(z-1)\psi_0(z)-z+\frac{1}{2}\ln(2\pi)+\frac{1}{2}\right],\ \ z\notin\mathbb{Z}_{\leq 0}
\end{equation*}
which follows from \eqref{Barnes} by differentiation. Thus, $\nu$-differentiating \eqref{p:9}, we find
\begin{align}\label{p:11}
	\frac{\partial S}{\partial\nu}=\sigma\left[\psi_0\left(\nu+\frac{1}{2}(1+\sigma)\right)-\psi_0\left(\nu+\frac{1}{2}(1-\sigma)\right)\right]&\,-2\ln\left[\frac{\Gamma(1-s+\nu)\Gamma(1+s+\nu)}{\Gamma^2(\frac{1}{2}+\nu)}\right]+2\ln(s+\nu)\nonumber\\
	&+\mathcal{O}\big(t^{1-\sigma}\big),\ \ \ t\downarrow 0.
\end{align}
Since also
\begin{equation}\label{p:12}
	p\frac{\partial q}{\partial\nu}=-\frac{\sigma}{B}\frac{\partial B}{\partial\nu}+\mathcal{O}\big(t^{1-\sigma}\big)=-\sigma\left[\psi_0\left(\nu+\frac{1}{2}(1+\sigma)\right)-\psi_0\left(\nu+\frac{1}{2}(1-\sigma)\right)\right]+\mathcal{O}\big(t^{1-\sigma}\big),\ \ t\downarrow 0,
\end{equation}
expansion \eqref{p:10} follows now from \eqref{p:11} and \eqref{p:12}.
\end{proof}
Towards the end of our derivation we are now left with combining our result: first  by Propositions \ref{action}, \ref{Hamiltonian} and \ref{deriv},
\begin{align*}
	\exp\bigg[-&\frac{t}{2}H(q,p,t,\nu)+\frac{1}{2}S(t;\nu,\lambda)-\frac{\nu}{4}\left(p\frac{\partial q}{\partial\nu}+\frac{\partial S}{\partial\nu}\right)\bigg]=t^{\frac{1}{4}\sigma^2}\e^{3\zeta'(-1)-\frac{3}{4}\sigma^2\ln 2+\frac{1}{12}\ln 2}\times\\
	&\times\,\left(\frac{G^2(1-s)G^2(1+s)}{G(1-s+\nu)G(1+s+\nu)}\right)^{-1}\frac{G^2(\frac{1}{2})\Gamma(\frac{1}{2})}{G^2(\frac{1}{2}+\nu)\Gamma(\frac{1}{2}+\nu)}\exp\left[-\frac{\nu}{2}\ln\left(\frac{\Gamma(1-s+\nu)\Gamma(1+s+\nu)}{\Gamma^2(\nu+\frac{1}{2})}\right)\right]\times\\
	&\times(s+\nu)^{\frac{\nu}{2}}\exp\left[\frac{1}{2}\ln\left(\frac{\Gamma^2(s)}{\Gamma^2(1-s)}\frac{\Gamma(1-s+\nu)(s+\nu)}{\Gamma(1+s+\nu)}\right)\right]\left(1+\mathcal{O}\big(t^{1-\sigma}\big)\right),\ \ t\downarrow 0,\ \ s+\nu>0.
\end{align*}
On the other hand, see \eqref{e:5}, \eqref{e:6}, as $t\downarrow 0$ with $s+\nu>0$,
\begin{equation*}
	\cosh\left(\frac{q}{2}\right)=t^{-\frac{1}{2}\sigma}\e^{(\frac{3}{2}\sigma-1)\ln 2}\exp\left[-\frac{1}{2}\ln\left(\frac{\Gamma^2(s)}{\Gamma^2(1-s)}\frac{\Gamma(1-s+\nu)}{\Gamma(s+\nu)}\right)\right]\left(1+\mathcal{O}\big(\max\{t^{1-\sigma},t^{\sigma}\}\big)\right)
\end{equation*}
so that all together, 
\begin{align*}
	\tau(t;\nu,\lambda)=&\,\e^{3\zeta'(-1)-(\frac{3}{4}\sigma^2-\frac{3}{2}\sigma+\frac{11}{12})\ln 2}\left(\frac{G^2(1-s)G^2(1+s)}{G(1-s+\nu)G(1+s+\nu)}\right)^{-1}\frac{G^2(\frac{1}{2})\Gamma(\frac{1}{2})}{G^2(\frac{1}{2}+\nu)\Gamma(\frac{1}{2}+\nu)}\times\\
	&\times\exp\left[-\frac{\nu}{2}\ln\left(\frac{\Gamma(1-s+\nu)\Gamma(1+s+\nu)}{\Gamma^2(\nu+\frac{1}{2})}\right)\right](s+\nu)^{\frac{\nu}{2}}\,t^{\frac{\sigma}{4}(\sigma-2)}\left(1+\mathcal{O}\big(\max\{t^{1-\sigma},t^{\sigma}\}\big)\right),
\end{align*}
as $t\downarrow 0$, uniformly in $(s,\nu)\in(0,\frac{1}{2})\times(-\frac{1}{2},+\infty)$ chosen from compact subsets such that $s+\nu>0$. The last expansion matches to leading order precisely \eqref{prob} and thus completes the proof of Theorem \ref{res:1}.


\begin{thebibliography}{100}

\bibitem{BBD} J. Baik, R. Buckingham, and J. DiFranco, Asymptotics of Tracy-Widom distributions and the total integral of a Painlev\'e
II function, {\it Commun. Math. Phys.} {\bf 280}, 463-497 (2008).
\bibitem{BT} E. Basor, C. Tracy, Asymptotics of a tau-function and Toeplitz determinants with singular generating functions, {\it International J. of Mod. Phys. A} {\bf 7}, Suppl. 1A (1992), 83-107
\bibitem{B} T. Bothner, A short note on the scaling function constant problem in the two-dimensional Ising model, {\it J. Stat. Phys.} (2018) 170:672-683
\bibitem{BIP} T. Bothner, A. Its, A. Prokhorov, On the analysis of incomplete spectra in random matrix theory through an extension of the Jimbo-Miwa-Ueno differential, preprint arXiv:1708.06480
\bibitem{DIK} P. Deift, A. Its, and I. Krasovsky, Asymptotics of the Airy-kernel determinant, {\it Commun. Math. Phys.} {\bf 278}, 643-678 (2008).
\bibitem{DIKZ} P. Deift, A. Its, I. Krasovsky, X. Zhou. The Widom-Dyson constant and related questions of the asymptotic
analysis of Toeplitz determinants. Proceedings of the AMS meeting, Atlanta 2005. {\it J. Comput. Appl. Math.} {\bf 202} (2007), 26-47
\bibitem{D} F. Dyson, Fredholm determinants and inverse scattering problems {\it Commun. Math. Phys.} {\bf 47} (1976) 171-183
\bibitem{E} T. Ehrhardt, Dyson's constant in the asymptotics of the Fredholm determinant of the sine kernel. {\it Commun. Math. Phys.} {\bf 262}, 317-341 (2006)
\bibitem{E2} T. Ehrhardt, The asymptotics of a Bessel-kernel determinant which arises in Random Matrix Theory, {\it Advances in
Mathematics} {\bf 225}, 3088-3133 (2010)
\bibitem{F} P. Federbush, A two-dimensional relativistic field theory, {\it Phys. Rev.} {\bf 121}, 1247-1249 (1961)
\bibitem{GIL} O. Gamayun,  N. Iorgov, O. Lisovyy, How instanton combinatorics solves Painlev\'e VI, V and III's, {\it Journal of Physics A} {\bf 46} (2013): 335203.
\bibitem{ILT} A. Its, O. Lisovyy, Y. Tykhyy, Connection Problem for the Sine-Gordon/Painlev\'e III Tau-Function and Irregular Conformal Blocks. {\it International Mathematics Research Notices}, 22 pages, 2014.
\bibitem{IP} A. Its, A. Prokhorov, Connection problem for the tau-function of the sine-gordon reduction of Painlev\'e-III equation via the Riemann-Hilbert approach, {\it International Mathematics Research Notice}, 22 pages, 2016.
\bibitem{IP1} A. Its, A. Prokhorov, On some Hamiltonian properties of the isomonodromic tau functions, {\it Reviews in Mathematical Physics}, Vol. {\bf 30}, No. 07, 1840008 (2018)
\bibitem{ILP} A. Its, O. Liovyy, A. Prokhorov, Monodromy dependence and connection formulae for isomonodromic tau functions, {\it Duke Math. J.}, Vol. {\bf 167}, No. 7 (2018), 1347-1432
\bibitem{JMU} M. Jimbo, T. Miwa, K. Ueno, Monodromy preserving deformation of linear ordinary differential equations with rational coefficients. I, {\it Physica} {\bf D2}, 306-352 (1981)
\bibitem{J} M. Jimbo, Monodromy problem and the boundary condition for some Painlev\'e equations, {\it Publ. RIMS}, Kyoto Univ. {\bf 18} (1982) 1137-1161.
\bibitem{JMMS} M. Jimbo, T. Miwa, Y. Mori, M. Sato, Density matrix of an impenetrable Bose gas and the fifth Painlev\'e transcendent, {\it Physica} {\bf 1D} (1980) 80-158
\bibitem{K} I. Krasovsky, Gap probability in the spectrum of random matrices and asymptotics of polynomials orthogonal on an arc of the unit circle, {\it Int. Math. Res. Not.} {\bf 2004} (2004), 1249-1272
\bibitem{L} A. Lenard, Some remarks on large Toeplitz determinants,  {\it Pacific J. Math} {\bf 42} (1972) 137
\bibitem{MTW} B. McCoy, C. Tracy, T. Wu, Painlev\'e functions of the third kind, {\it J. Math. Phys.} {\bf 18} (1977), 1058-1092.
\bibitem{MW} B. McCoy, T. Wu, {\it The Two-Dimensional Ising model, second edition} Dover Publications, 2014.
\bibitem{NIST} NIST Digital Library of Mathematical Functions, http://dlmf.nist.gov
\bibitem{P} J. Palmer, {\it Planar Ising correlations}, Progress in Mathematical Physics, 49. Birk\"auser Boston, Inc., Boston, MA, 2007
\bibitem{R} S. Ruijsenaars, On the two-point functions of some integrable relativistic quantum field theories, {\it Journal of Mathematical Physics} {\bf 24}, 922 (1983)
\bibitem{SMJ} M. Sato, T. Miwa, M. Jimbo, Holonomic quantum fields III and IV, {\it Publ. RIMS}, Kyoto Univ. {\bf 15} (1979) 577, {\bf 15} (1979) 871.
\bibitem{T} C. Tracy, Asymptotics of a $\tau$-function arising in the two-dimensional Ising model, {\it Commun. Math. Phys.} {\bf 142} (1991), 297-311.
\bibitem{TW} C. Tracy, H. Widom, Asymptotics of a class of solutions to the cylindrical Toda equations, {\it Commun. Math. Phys.} {\bf 190} (1998), 697-721.
\bibitem{Wi} H. Widom, The strong Szeg\"o limit theorem for circular arcs, {\it Indiana Univ. Math. J.} {\bf 21} (1971) 277-283
\bibitem{W} T.-T. Wu, Theory of Toeplitz determinants and the spin correlations of the two-dimensional Ising model, I. {\it Phys. Rev.} {\bf 149}, 380-401 (1966)
\bibitem{BMTW} T. Wu, B. McCoy, C. Tracy, E. Barouch, Spin-spin correlation functions for the two-dimensional Ising model: exact theory in the scaling region, {\it Phys. Rev.} {\bf 13} (1976), 316-374.
\end{thebibliography}
\end{document}